\tikzset{
  ->-/.style={decoration={markings, mark=at position 0.5 with {\arrow{to}}},
              postaction={decorate}},
}
\tikzset{-<-/.style={decoration={markings,mark=at position 0.5 with %
    {\arrow[scale=1.5,>=stealth]{<}}},postaction={decorate}}}
\numberwithin{equation}{section}
\newcommand{\mc}{\mathcal}
\newcommand{\mf}{\mathfrak}
\renewcommand{\sl}{\mathfrak{sl}}
\def\tilde{\widetilde}
\def\bar{\overline}
\newcommand{\so}{\mathfrak{so}}
\newcommand{\eps}{\epsilon}
\newcommand{\g}{\mathfrak{g}}
\newcommand{\tr}{\triangle}
\newcommand{\til}{\widetilde}
\newcommand{\C}{\mathbb C}
\newcommand{\Z}{\mathbb Z}
\newcommand{\op}{\operatorname}
\newcommand{\mbf}{\mathbf}
\newcommand{\mbb}{\mathbb}
\newcommand{\ip}[1]{\left\langle #1 \right\rangle}
\newcommand{\dtil}[1]{\bar{#1}}
\newcommand{\R}{\mbb R}
\renewcommand{\d}{\mathrm{d}}
\def\Tr{{\mathrm{Tr}}}
\def\O{{\mathcal O}}
\def\sh{{\sf h}}
\def\i{\mathsf{i}}
\def\be{\begin{equation}}
\def\ee{\end{equation}}
\def\gl{\mathfrak{gl}}
\def\sl{\mathfrak{sl}}
\def\so{\mathfrak{so}}
\def\sp{\mathfrak{sp}}
\def\gtwo{\mathfrak{g}_2}
\def\ffour{\mathfrak{f}_4}
\def\esix{\mathfrak{e}_6}
\def\eseven{\mathfrak{e}_7}
\def\eeight{\mathfrak{e}_8}
\newtheorem{theorem}{Theorem}[section]
\newtheorem{proposition}[theorem]{Proposition}
\newtheorem{definition}[theorem]{Definition}
\def\eqn#1{eqn.\ #1}
\def\eqns#1{eqns.\ #1}
\def\fig#1{Fig.\ #1}
\begin{document}

\pagenumbering{Alph} 
\begin{titlepage}
\begin{flushright}

\end{flushright}
\vskip 1.5in
\begin{center}
{\bf\Large{Gauge Theory And Integrability, II}}
\vskip 0.5cm 
{Kevin Costello$^1$, Edward Witten$^2$ and Masahito Yamazaki$^3$} \vskip 0.05in 
{\small{ 
\textit{$^1$Perimeter Institute for Theoretical Physics, }\vskip -.4cm
\textit{Waterloo, ON N2L 2Y5, Canada}
\vskip 0 cm 
\textit{$^2$School of Natural Sciences, Institute for Advanced Study,}\vskip -.4cm
\textit{Einstein Drive, Princeton, NJ 08540 USA}
\vskip 0 cm 
\textit{$^3$Kavli Institute for the Physics and Mathematics of the Universe (WPI), }\vskip -.4cm
\textit{University of Tokyo, Kashiwa, Chiba 277-8583, Japan}
}}

\end{center}

\vskip 0.5in
\baselineskip 16pt
\begin{abstract}
Starting with a four-dimensional gauge theory approach  to rational, elliptic, and trigonometric solutions of the Yang-Baxter equation,  we determine the corresponding
quantum group deformations  to all orders in $\hbar$ by deducing their RTT presentations.  The arguments we give are a mix of familiar
ones with reasoning that is more transparent from the four-dimensional gauge theory point of view.  The arguments apply most directly for $\gl_N$ and can be extended to
all simple Lie algebras other than $\eeight$ by taking into account the self-duality of some representations, the framing anomaly for Wilson operators, and the existence
of quantum vertices at which several Wilson operators can end.
\end{abstract}
\date{September, 2017}
\end{titlepage}
\pagenumbering{arabic} 

\tableofcontents

\section{Introduction}

Several years ago, it was argued by one of us \cite{Costello_2013}, originally on the basis of relatively abstract arguments, that the usual rational, elliptic, and trigonometric
solutions of the Yang-Baxter equation can be systematically derived from a certain four-dimensional gauge theory.   Recently \cite{Part1}, we have reformulated this approach in a 
more direct way and developed it further.  Familiarity with that paper will be assumed here.  For an informal introduction to this subject, see \cite{Witten_2016}.

The four-dimensional gauge theory in question is most simply defined on a  product four-manifold $\Sigma\times C$, where $\Sigma$ is a smooth oriented two-manifold
and $C$ is a complex Riemann surface.  In the present paper, our considerations are local along $\Sigma$, so the choice of $\Sigma$ does not matter.  We will simply
take $\Sigma$ to be the $xy$ plane.  $C$ on the other hand is endowed with a holomorphic differential $\omega$ that has no zeroes  and that has poles at infinity along $C$
(that is, there is a compactification $\bar C$ of $C$ such that $\omega$ has a pole at each point of $\bar C\backslash C$).
  This condition leaves three choices of $C$, which may be the complex plane $\C$, $\C^\times =\C\backslash \{0\}$, or a Riemann surface
of genus 1.   These choices correspond to the rational, elliptic, and trigonometric solutions of the Yang-Baxter equation.

In \cite{Part1}, the relevant structures were described and computed to lowest nontrivial order in the quantum deformation parameter $\hbar$.  If $G$ is the gauge group and 
$\g$ its Lie algebra, then at the classical level the theory has Wilson line operators associated to representations of the algebra $\g[[z]]$. At the quantum level, there is
a nontrivial $R$-matrix and a nontrivial operator product expansion (OPE) for these line operators; in addition they are subject to a framing anomaly.  Moreover, at the quantum level
the algebra $\g[[z]]$, or more precisely its universal enveloping algebra, is deformed to a quantum group.    

In the rational,  trigonometric, or elliptic case, the relevant quantum
group is known as the Yangian, the quantum loop group, or the elliptic quantum group.  However, in \cite{Part1}, we saw the deformation from $\g[[z]]$ only to lowest nontrivial
order in $\hbar$.  In the present paper, we will describe a way
to extend the analysis to all orders in $\hbar$.   

We should clarify the meaning of the phrase ``all orders.''  Many of the formulas in this paper make sense and are valid with $\hbar$ treated as a complex number.  However,
the theory introduced in \cite{Costello_2013} and further developed here is, in its present form, only valid perturbatively.  Hence, statements in the present paper really refer
to formal power series in $\hbar$.

To find a picture valid to all orders, we will describe in this paper what are known in the literature as RTT presentations for the relevant quantum groups
 \cite{Takhtajan-Faddeev,Kulish-Sklyanin,Drinfeld_Hopf,Drinfeld_original,FRT_1988,Chari-Pressley}.  
 We describe RTT presentations in the simplest case,  the Yangian of $\gl_N$, in section \ref{RTT_gl}.  The arguments are based on simple manipulations of Wilson operators.
  In sections \ref{RTT_so_sp}, \ref{RTT_sl} , and \ref{RTTexc}, we extend this treatment
successively to $\so_N$, $\sp_{2N}$, $\sl_N$, and to all exceptional simple Lie algebras other than\footnote{Unfortunately, our approach does not work conveniently
for $\eeight$ because its Yangian algebra does not have a convenient representation.  See 
section \ref{RTTexc} for a fuller explanation.}  $\eeight$. This involves several new ingredients relative to the case of $\gl_N$, mainly the self-dual nature of certain Wilson
operators, the framing anomaly, and the existence of vertices on which several Wilson operators can end. In section \ref{unirational}, we explain in what sense the algebra described by RTT presentations, for any of these cases,
is unique as a deformation of the universal enveloping algebra of $\g[[z]]$.    In section \ref{trigonometric}, we extend the analysis to describe RTT presentations in the trigonometric case.
This is more complicated than the rational case, since
 trigonometric solutions of the Yang-Baxter equations have less symmetry than rational ones, and since in the gauge theory language a rather subtle boundary condition
is needed to describe the trigonometric solutions of Yang-Baxter.  It turns out that our analysis in section \ref{trigonometric} gives an interesting perspective on purely three-dimensional Chern-Simons theory.
We investigate in section \ref{unitrigonometric} the uniqueness of the algebras obtained from the trigonometric RTT presentations, and meet a small surprise, which however turns out to have a simple explanation in the gauge theory language: the most obvious uniqueness hypothesis is not valid, as there are additional deformation parameters that appear
when $\g$ has rank greater than 1. Finally, in  section \ref{elliptic}, we explain what one can say along similar lines in the elliptic case.   

Many results in this paper are known from other points of view.  For example, the RTT presentations which are our main focus are certainly already known, at least for classical groups.
The arguments in this paper are a mix of familiar ones with reasoning that is more transparent from the four-dimensional gauge theory point of view.  Some
 results presented here may be novel and some arguments may add something to what is previously known.  For example, the quantum determinant for $\sl_N$ is certainly
already known, but it possibly adds something to derive it from a vertex with manifest symmetry that can be constructed by elementary arguments.  Also possibly novel
are the  analogs we construct of the quantum
determinant for
exceptional algebras (other than $\eeight$) and the resulting RTT presentations, as well as
 the additional deformation parameters that we find in section \ref{trigonometric} in the trigonometric case for algebras of
rank greater than 1.

\section{\texorpdfstring{RTT Relation for $Y(\gl_N)$}{RTT Relation for Y(gl(n))}}\label{RTT_gl}

\subsection{Basics}\label{basics}

We begin with the simplest case: the Yangian for the algebra $\mathfrak{g}=\gl_N$. This corresponds to taking $C$ to be the complex $z$-plane $\C$.

Consider a general Wilson line supported at $z = 0$, and running along a straight line in the $xy$ plane which we will draw as horizontal.  
To specify such a Wilson line at the classical level, we give a finite-dimensional vector space $W$, together with a sequence of operators
\begin{equation}
t_{a,n,W} : W \to W
\end{equation}
for $n \ge 0$.  We assume that $t_{a,n,W} = 0$ for $n \gg 0$.  As we saw in \cite{Part1}, section 3.3,   from such a sequence of operators
we can construct a classical Wilson line in which the $n^{\rm th}$ derivative of the gauge field $\frac{1}{n!}\partial_z^n A_{x,a}$ is coupled by $t_{a,n,W}$.  At the classical level, the Wilson line is gauge-invariant as long as the commutation relations
\begin{equation}\label{logo}
[t_{a,n,W},t_{b,m,W}]=f_{ab}{}^c t_{c,n+m,W} 
\end{equation}
are satisfied.  

The situation is different at the quantum level.
In order for the Wilson line operator to be gauge-invariant at the quantum level, the operators $t_{a,n,W}$ need to satisfy a deformed version of the relation \eqref{logo}. 
In our previous paper \cite{Part1} we derived this fact using two different methods;
in section 5.4 (and in particular \eqn (5.23)) we derived the order $\hbar^2$ correction to the relation by a slightly indirect method,
and in section 8 by an explicit two-loop computation.

Here, we will derive the quantum corrections to this relation in yet another way, which will produce an answer that is valid in all orders in $\hbar$.  The algebra generated by $t_{a,n}$ and satisfying these quantum-corrected commutation relations is known as the Yangian algebra.    The analysis is most simple for $\mf{g} = \mf{gl}_N$,  and we will make this assumption throughout the present section. 

Let $e^i_j$ be the elementary $N\times N$ matrix, with 1 in the $(i,j)$ entry and 0 elsewhere.   We write $t^i_j$ for the corresponding generator of $\gl_N$.  In the basis of $\gl_N$ given by the $t^i_j$, the Lie
algebra commutation relations read 
\begin{align}
[t^i_j, t^k_l]=\delta_j^k t^i_l - \delta^i_l t^k_j \;,
\label{gl_commutation}
\end{align}
and the invariant bilinear form is
\be\label{inform} 
(t^i_j, t^k_l)=\delta^i_l \delta^k_j  \;.
\ee
Since the notation
$t^i_{j,n}$ for the generator of $\gl_N[[z]]$ corresponding to $t^i_jz^n$ seems clumsy, we will write instead $t^i_j[n]$.   Acting on a tensor product $W_1\otimes W_2$ of two
representations of $\gl_N$, the invariant bilinear form (\ref{inform}) determines an invariant operator which in the basis given by the $t^i_j$ is just
\be\label{cba}
c=\sum_{i,j} t^i_j\otimes t^j_i \;. 
\ee

Now, consider a pair of Wilson lines in the $xy$ plane.  In the horizontal direction, we take a general Wilson operator associated
to a representation $W$.  For the moment, we assume that $W$ is a representation of $\gl_N$, and not a more general representation of $\gl_N[[z]]$.  Thus classically
this Wilson line couples only to the gauge field $A$ and not to its $z$ derivatives.  In addition, we take a vertical Wilson line in the fundamental, $N$-dimensional representation
of $\gl_N$, at an arbitrary value of $z$.   
The quasiclassical $r$-matrix, appearing in $R=1+\hbar r+\dots$, is in general $c/z$, as computed in \cite{Part1}, section 4, where $c$ for $\gl_N$ is defined in eqn.
(\ref{cba}).  In the present context, this is 
\begin{equation}\label{presc}
r=\sum_{i,j} \frac{1}{z} t^i_{j,W} \otimes e^j_i : W \otimes \C^N \to W \otimes \C^N  \;.
\end{equation} 
(We use the fact that the generators of $\gl_N$ in the fundamental representation are $t^i_j=e^i_j$.)  

Now, suppose that the horizontal Wilson line $W$ is associated classically to an arbitrary representation of $\gl_N[[z]]$
so that $\frac{1}{k!}\partial_z^k A$ is coupled to an operator $t^i_{j,W}[k]$ (if no confusion arises we will just write $t^i_j[k]$).   Crossing this with the same vertical Wilson line as before, the same calculation as in \cite{Part1} tells us that, to lowest order in $\hbar$, the $r$-matrix is
\begin{equation}\label{genc}
r=\sum_{i,j} \sum_{k \ge 0} \frac{1}{k!} \partial_z^k \frac{1}{z} (t^i_{j,W}[k] \otimes e^j_i )
=\sum_{i,j} \sum_{k \ge 0}  (-1)^k \frac{1}{z^{k+1}} (t^i_{j,W}[k] \otimes e^j_i )
\;.
\end{equation}

This tells us that, to leading order in $\hbar$, we can recover the operators $t^i_{j,W}[k]$ acting on $W$ by the following procedure. We cross with a vertical Wilson line in the fundamental representation at some point $z\in C$.   We place incoming and outgoing states on $\langle i | $ and $| j \rangle$ on this vertical Wilson line above and below the
point at which it crosses the horizontal Wilson line.\footnote{As explained in
\cite{Part1}, because the theory is infrared-free, away from crossing points one can label each Wilson line by a specified state in the appropriate representation of 
$\g[[z]]$.}  The $R$-matrix specialized in this way gives an operator
\begin{equation}
T^i_j(z) : W \to W 
\end{equation}
as in \fig \ref{figure_crossing}. 
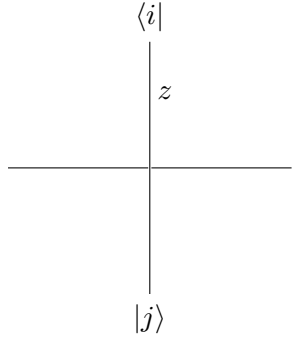
\begin{figure}[htbp]
\begin{center}
\begin{tikzpicture}
\node(In) at (0,4) {$\langle i|$};
\node(Out) at (0,0) {$|j \rangle$};
\node(HLeft) at (-2,2) {};
\node(HRight) at (2,2) {};
\draw (HLeft) to (HRight); 
\draw[very thick, color=white] (In) to (Out);
\draw (In) to (Out);  
\node at (0.2,3) {$z$};
\end{tikzpicture}
\caption{\small{A vertical Wilson line, equipped with an incoming state $\langle i |$ and an outgoing state $|j \rangle$, gives rise to an operator acting on the states of a horizontal Wilson line. }}
\label{figure_crossing}
\end{center}
\end{figure}
If we then expand  $T^i_j(z)$ in powers of $z$ we find
\begin{equation}\label{T_series}
T^i_j(z) = \delta^i_j + \frac{ \hbar}{z} t^i_j[0] -  \frac{ \hbar }{z^{2}} t^i_j[1] + \dots + (-1)^{k}\frac{ \hbar}{z^{k+1}} t^i_j[k] + \dots .
\end{equation}

Comparing to eqn. (\ref{genc}), we see that the operators $t^i_j[m]$ reduce, modulo $\hbar$, to the classical generators of $\gl_N[[z]]$.  We can, however, take eqn. (\ref{T_series})
as a {\it definition} of operators $t^i_j[m]:W\to W$ to all orders in $\hbar$. 
The goal of this section is to derive the relations satisfied by those operators at the \emph{quantum} level. We will do this by deriving a relation, known as the RTT relation, for the operator $T^i_j(z)$.  This will imply that the operators $t^i_j[k]$ satisfy the relations of the Yangian algebra, a deformation of the universal enveloping algebra of $\gl_N[[z]]$.

To deduce the RTT relations, we consider a more elaborate picture, with the same horizontal Wilson line as before but now a pair of vertical Wilson lines in the
fundamental representation of $\gl_N$, at points $z,z'\in C$.    This is illustrated in \fig \ref{figure_parallel}. 
We label the vertical Wilson lines with respective incoming and outgoing states $\langle i|$, $|j \rangle$ and $\langle k|$, $|l \rangle$,  as shown in the figure.
The operator we find on the horizontal Wilson line  (read from right to left in the figure) is then simply the  composition $T^i_j(z) T^k_l(z')$ of the separate
operators from the two vertical Wilson lines. 

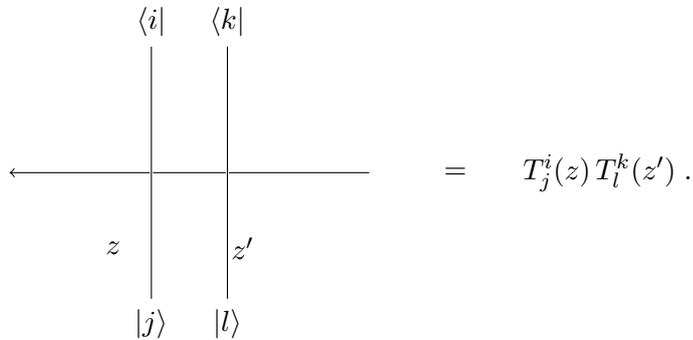
\begin{figure}[htbp]
\begin{center}
\begin{tikzpicture}
\node(In1) at (0,4) {$\langle i|$};
\node(Out1) at (0,0) {$|j \rangle$};
\node(In2) at (1,4) {$\langle k|$};
\node(Out2) at (1,0) {$|l \rangle$};
\node(HLeft) at (-2,2) {};
\node(HRight) at (3,2) {};
\draw[->] (HRight) to (HLeft); 
\draw[very thick, color=white] (In1) to (Out1);
\draw[very thick, color=white] (In2) to (Out2);
\draw[rounded corners] (In1.south) to  (Out1);
\node at (1.2,1) {$z'$};
\draw[rounded corners] (In2) to  (Out2);
\node at (-0.5,1) {$z$};
\node at (4,2) {$=$};
\node at (6,2) {$T^i_j(z) \, T^k_l(z') \;.$};
\end{tikzpicture}
\caption{\small{Two parallel vertical Wilson lines crossing a horizontal one lead to a composition of the corresponding operators.}}
\label{figure_parallel} 
\end{center}
\end{figure}

The goal now is to find relations among the operators $T^i_j(z)$ by moving the two vertical Wilson lines past each other.  However, if we do this while keeping them vertical,
then at some point they will coincide if projected to the $xy$ plane, and this does not lead to a conventional simple picture.  
It is more useful to let the ``vertical'' Wilson lines bend at some point, and cross each other, above or below the horizontal Wilson line.   There are two possible pictures,
as indicated in \fig \ref{figure_bent}, and they are equivalent by virtue of the same arguments that lead to the Yang-Baxter equation.

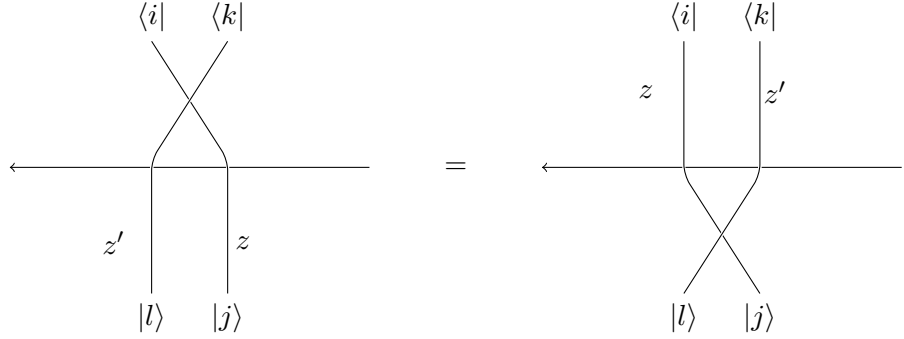
\begin{figure}[htbp]
\begin{center}
\begin{tikzpicture}
\begin{scope}
\node(In1) at (0,4) {$\langle i|$};
\node(Out1) at (1,0) {$|j \rangle$};
\node(In2) at (1,4) {$\langle k|$};
\node(Out2) at (0,0) {$|l \rangle$};
\node(HLeft) at (-2,2) {};
\node(HRight) at (3,2) {};
\draw[->] (HRight) to (HLeft); 
\draw[very thick, color=white] (In1) to (Out2);
\draw[very thick, color=white] (In2) to (Out1);
\draw[rounded corners] (In1.south) to (1,2.1) to  (Out1);
\node at (1.2,1) {$z$};
\draw[color=white, very thick, rounded corners] (In2.south) to (0,2.1) to  (Out2);
\draw[rounded corners] (In2.south) to (0,2.1) to  (Out2);
\node at (-0.5,1) {$z'$};
\end{scope}

\node at (4,2) {$=$};

\begin{scope}[shift={(7,0)}]
\node(In1) at (0,4) {$\langle i|$};
\node(Out1) at (1,0) {$|j \rangle$};
\node(In2) at (1,4) {$\langle k|$};
\node(Out2) at (0,0) {$|l \rangle$};
\node(HLeft) at (-2,2) {};
\node(HRight) at (3,2) {};
\draw[->] (HRight) to (HLeft);
\draw[very thick, color=white] (In1) to (Out2);
\draw[very thick, color=white] (In2) to (Out1);
\draw[rounded corners] (In1.south) to (0,1.9) to  (Out1.north);
\node at (-0.5,3) {$z$};
\draw[color=white, very thick, rounded corners] (In2.south) to (1,1.9) to  (Out2.north);
\draw[rounded corners] (In2.south) to (1,1.9) to  (Out2.north);
\node at (1.2,3) {$z'$};
\end{scope}

\end{tikzpicture}
\caption{\small{Two vertical Wilson lines are ``bent'' to cross each other, above or below a given horizontal Wilson line. (To avoid extraneous considerations involving
a framing anomaly, one can extend the ``vertical'' Wilson lines in this and subsequent pictures so that they are indeed asymptotically vertical.)}}
\label{figure_bent} 
\end{center}
\end{figure}

Let us now interpret both of these diagrams in terms of operators on the horizontal Wilson line $W$.  Let us consider the diagram on the left, and read it from top to bottom. At the crossing of the two vertical Wilson lines, the $R$-matrix $R(z-z')$ will act on the incoming and outgoing states.   The crossing of a vertical Wilson line with the horizontal
one gives a factor $T^i_j(z)$.   Thus, we can evaluate the diagrams as indicated in Figs. \ref{one_bent} and \ref{two_bent}.

\begin{figure}[htbp]
\begin{center}
\begin{tikzpicture}
\node(In1) at (0,4) {$\langle i|$};
\node(Out1) at (1,0) {$|j \rangle$};
\node(In2) at (1,4) {$\langle k|$};
\node(Out2) at (0,0) {$|l \rangle$};
\node(HLeft) at (-2,2) {};
\node(HRight) at (3,2) {};
\draw[->] (HRight) to (HLeft); 

\draw[very thick, color=white] (In1) to (Out2);
\draw[very thick, color=white] (In2) to (Out1);
\draw[rounded corners] (In1.south) to (1,2.1) to  (Out1);
\node at (1.2,1) {$z$};
\draw[color=white, very thick, rounded corners] (In2.south) to (0,2.1) to  (Out2);
\draw[rounded corners] (In2.south) to (0,2.1) to  (Out2);
\node at (-0.5,1) {$z'$};

\begin{scope}[shift={(9,0)}]
\node at (-3.7,2) {$= \displaystyle \sum_{r,s} R^{ik}_{rs} (z-z')$};  
\node(In1) at (0,4) {$\langle r|$};
\node(Out1) at (0,0) {$|j \rangle$};
\node(In2) at (1,4) {$\langle s|$};
\node(Out2) at (1,0) {$|l \rangle$};
\node(HLeft) at (-2,2) {};
\node(HRight) at (3,2) {};
\draw[->] (HRight) to (HLeft); 
\draw[very thick, color=white] (In1) to (Out1);
\draw[very thick, color=white] (In2) to (Out2);
\draw[rounded corners] (In1.south) to  (Out1);
\node at (1.2,1) {$z$};
\draw[rounded corners] (In2) to  (Out2);
\node at (-0.5,1) {$z'$};
\end{scope}
\end{tikzpicture}
\caption{\small{The first picture in \fig \ref{figure_bent} can be evaluated as shown here.}}\label{one_bent}
\end{center}
\end{figure}
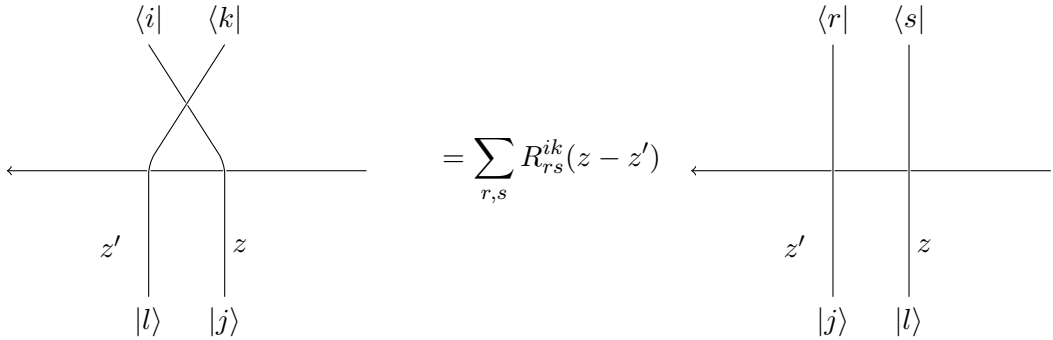

\begin{figure}[htbp]
\begin{center}
\begin{tikzpicture}
\node(In1) at (0,4) {$\langle i|$};
\node(Out1) at (1,0) {$|j \rangle$};
\node(In2) at (1,4) {$\langle k|$};
\node(Out2) at (0,0) {$|l \rangle$};
\node(HLeft) at (-2,2) {};
\node(HRight) at (3,2) {};
\draw[->] (HRight) to (HLeft); 
\draw[very thick, color=white] (In1) to (Out2);
\draw[very thick, color=white] (In2) to (Out1);
\draw[rounded corners] (In1.south) to (0,1.9) to  (Out1.north);
\node at (-0.5,3) {$z$};
\draw[color=white, very thick, rounded corners] (In2.south) to (1,1.9) to  (Out2.north);
\draw[rounded corners] (In2.south) to (1,1.9) to  (Out2.north);
\node at (1.2,3) {$z'$};

\begin{scope}[shift={(9,0)}]
\node at (-4,2) {$= \displaystyle \sum_{r,s} R^{rs}_{jl} (z-z')$};  
\node(In1) at (0,4) {$\langle i|$};
\node(Out1) at (0,0) {$|r \rangle$};
\node(In2) at (1,4) {$\langle k|$};
\node(Out2) at (1,0) {$|s \rangle$};
\node(HLeft) at (-2,2) {};
\node(HRight) at (3,2) {};
\draw[->] (HRight) to (HLeft);
\draw[very thick, color=white] (In1) to (Out1);
\draw[very thick, color=white] (In2) to (Out2);
\draw[rounded corners] (In1.south) to  (Out1);
\node at (1.2,3) {$z'$};
\draw[rounded corners] (In2) to  (Out2);
\node at (-0.5,3) {$z$};
\end{scope}
\end{tikzpicture}
\caption{\small{ The second picture in \fig \ref{figure_bent} has this interpretation.  }}\label{two_bent}
\end{center}
\end{figure}
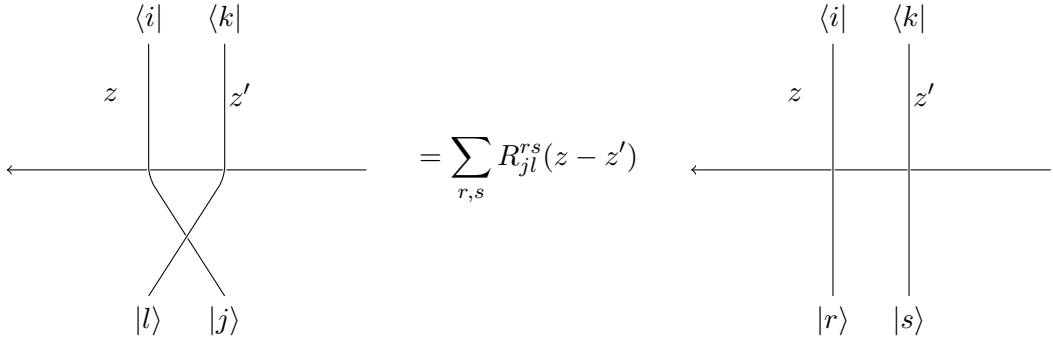

The equivalence between the two pictures gives us the following equation:
\begin{equation}
\sum_{r,s} R^{i k}_{r s}(z - z') T^r_j(z') T^s_l(z) = \sum_{r,s} T^i_r(z) T^k_s(z')R^{rs}_{jl}(z -z') \;.
\end{equation}
In interpreting this equation, note that the entries $R^{ij}_{rs}(z-z')$ of the $R$-matrix are scalar functions, whereas the entries $T^r_j$ of the $T$-operator are operators acting
on the representation $W$ associated to the horizontal Wilson line.

This equation is the \emph{RTT relation}, and is often written more succinctly as
\begin{equation}
R(z-z') T(z') T(z) = T(z) T(z') R(z-z') \;.
\label{RTT_glN}
\end{equation}
It is known (see \cite{Chari-Pressley,Drinfeld_original}) that this relation gives a presentation of the Yangian algebra for $\mf{gl}_N$.   We will explain in some detail
how this works.  

\subsection{\texorpdfstring{RTT relation in terms of an expansion of $T(z)$}{RTT relation in terms of an expansion of T(z)}}
As in \eqn \eqref{T_series}, let us expand the $z$-dependent operator on the horizontal Wilson line $T^i_j(z)$ in powers in $1/z$.
The coefficients  $t^i_j[m]$ in this expansion are a  quantum version of the operators which define the coupling of $\frac{1}{m!} \partial_z^m A$ to the Wilson line. 

Let $P : \C^N \otimes \C^N \to \C^N \otimes \C^N$ be the map which exchanges the two factors. The $R$ matrix for the fundamental representation of $\mf{gl}_N$ is\footnote{See for
example  \cite{Part1}, section 3.5. Note that an overall scalar factor multiplying  the $R$-matrix is irrelevant for our considerations here, as it does not affect the RTT relations.}
\begin{equation} 
R(z) =I +  \frac{\hbar}{z} P  \;.
 \end{equation}
This expression gives us the following diagrammatic identities:

\begin{center}
\begin{tikzpicture}
\node(In1) at (0,4) {$\langle i|$};
\node(Out1) at (1,0) {$|j \rangle$};
\node(In2) at (1,4) {$\langle k|$};
\node(Out2) at (0,0) {$|l \rangle$};
\node(HLeft) at (-2,2) {};
\node(HRight) at (3,2) {};
\draw[->] (HRight) to (HLeft);
\draw[very thick, color=white] (In1) to (Out2);
\draw[very thick, color=white] (In2) to (Out1);
\draw[rounded corners] (In1.south) to (1,2.1) to  (Out1);
\node at (1.2,1) {$z$};
\draw[color=white, very thick, rounded corners] (In2.south) to (0,2.1) to  (Out2);
\draw[rounded corners] (In2.south) to (0,2.1) to  (Out2);
\node at (-0.5,1) {$z'$};
\node at (6.5,2) {$=\ \  T^k_l(z')\, T^i_j(z) + \frac{ \hbar}{z-z'} T^i_l(z') \,T^k_j(z) \;.$};
\end{tikzpicture}
\end{center}

\begin{center}
\begin{tikzpicture}
\node(In1) at (0,4) {$\langle i|$};
\node(Out1) at (1,0) {$|j \rangle$};
\node(In2) at (1,4) {$\langle k|$};
\node(Out2) at (0,0) {$|l \rangle$};
\node(HLeft) at (-2,2) {};
\node(HRight) at (3,2) {};
\draw[->] (HRight) to (HLeft);
\draw[very thick, color=white] (In1) to (Out2);
\draw[very thick, color=white] (In2) to (Out1);
\draw[rounded corners] (In1.south) to (0,1.9) to  (Out1.north);
\node at (-0.5,3) {$z$};
\draw[color=white, very thick, rounded corners] (In2.south) to (1,1.9) to  (Out2.north);
\draw[rounded corners] (In2.south) to (1,1.9) to  (Out2.north);
\node at (1.2,3) {$z'$};
\node at (6.5,2) {$= \ \ T^i_j(z)\, T^k_l(z') +  \frac{\hbar}{z-z'} T^i_l(z)\, T^k_j(z) \:.$};
\end{tikzpicture}
\end{center}

From this we conclude that 
\begin{equation}
[T^k_l(z'), T^i_j(z) ]  =  \frac{\hbar}{z-z'} \left( - T^i_l(z')T^k_j(z) + T^i_l(z) T^k_j(z')  \right) \;.\label{RTT} 
\end{equation}
This equation is consistent at $z=z'$ because the expression in the brackets on the right hand  side has a zero when $z = z'$.  

We can rewrite \eqn \eqref{RTT} as an expression involving the generators $t^i_j[n]$.  We find that 
\begin{align}
\begin{split}
\sum_{m,n\ge 0}& (-1)^{n+m} (z')^{-n-1}z^{-m-1} \left[t^k_l[n], t^i_j[m]\right] \\
&= \frac{1}{z-z'}\sum_{n \ge 0} (z^{-n-1}  - (z')^{-n-1}) \left((-1)^n t^i_l[n]  \delta^k_j  - (-1)^n t^k_j[n]\delta^i_l  \right) \\
&+ \hbar \frac{1}{z - z'} \sum_{n,m \ge 0}(-1)^{n+m} \left( z^{-n-1} (z')^{-m-1} -   (z')^{-n-1}z^{-m-1}  \right)  \left( t^i_l[n] t^k_j[m]  \right) \;.
\end{split}
\end{align}
We can change $z \mapsto -z$, $z' \mapsto -z'$ to absorb the signs. 
The identity then becomes
\begin{multline}
 \sum_{m,n\ge 0} (z')^{-n-1}z^{-m-1} \left[t^k_l[n], t^i_j[m]\right] =\frac{1}{z-z'}\sum_{n \ge 0} (z^{-n-1}  - (z')^{-n-1}) \left( t^i_l[n]  \delta^k_j  -  t^k_j[n]\delta^i_l  \right) \\ 
-  \hbar \frac{1}{z - z'} \sum_{n,m \ge 0}\left(   z^{-n-1} (z')^{-m-1} -   (z')^{-n-1}z^{-m-1}  \right)   \left( t^i_l[n] t^k_j[m]  \right) \;.
\end{multline}

Using the identity
\begin{equation*}
\frac{z^{-n-1} -(z')^{-n-1} } {z-z'} = - (zz')^{-1} (z^{-n}  +  (z')^{-1} z^{-n+1} +  \dots +  (z')^{-n} ) \;,
\end{equation*}
we find that
the operators $t^i_j[n]$ satisfy the relation
\begin{align}
\begin{split}
&\sum_{n,m} (z')^{-n-1} z^{-m-1} \left[t^k_l[n], t^i_j[m]\right] =
\\
& \qquad - \sum_{n \ge 0} \left(z^{-n-1} (z')^{-1}  +z^{-n}  (z')^{-2} + \dots +  z^{-1} (z')^{-n-1} \right)  \left(t^i_l[n]  \delta^k_j  - t^k_j[n] \delta^i_l  \right) \\
& \qquad +  \hbar \sum_{m < n}\sum_{r=m+2}^{n+1}(z')^{-r} z^{r -(n+m+3)} \left(t^i_l[n] t^k_j[m] - t^i_l[m] t^k_j[n] \right)\;.
\end{split}
\end{align}
Equating coefficients gives us the commutation relations
\begin{align}
\begin{split}
&\left[t^k_l[n] , t^i_j[m] \right]  =  t^k_j[n+m] \delta^i_l - t^i_l[n+m]\delta^k_j \\
&\qquad -  \hbar \sum_{r+1 \le m,n } \left(t^i_l[r] t^k_j[m+n-1-r] - t^i_l[m+n-1-r] t^k_j[r]  \right)\;. \label{RTT_expanded} 
\end{split}
\end{align}
These are the commutations relations of $\mf{gl}_N[[z]]$, modified by a correction of order $\hbar$ that we will further discuss below. 

\subsection{Coproduct}
Next, let us describe the physical interpretation of the coproduct on the Yangian, in the RTT presentation.  The coproduct on the Yangian tells us how the Yangian algebra acts on the tensor product of two modules for the algebra. In terms of line operators, the coproduct on the Yangian will tell us how the generators $t^i_j[n]$ act on a horizontal Wilson line which is obtained by fusing two parallel horizontal Wilson lines.

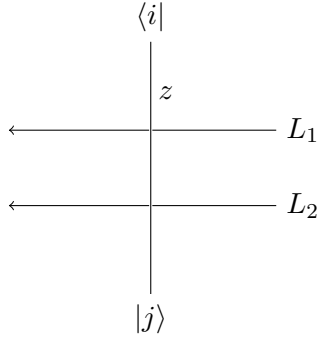
\begin{figure}[htbp]
\begin{center}
\begin{tikzpicture}
\node(In) at (0,4) {$\langle i|$};
\node(Out) at (0,0) {$|j \rangle$};
\node(HLeft) at (-2,2.5) {};
\node(HRight) at (2,2.5){$L_1$}; 
\node(HLeft2) at (-2,1.5) {};
\node(HRight2) at (2,1.5) {$L_2$};
\draw [<-] (HLeft) to (HRight);
 \draw [<-](HLeft2) to (HRight2); 
\draw[very thick, color=white] (In) to (Out);
\draw (In) to (Out);  
\node at (0.2,3) {$z$};
\end{tikzpicture}
\caption{\small{The configuration of Wilson lines associated to the coproduct.}}
\label{figure_coproduct}
\end{center}
\end{figure}

To understand this, let us consider a configuration (Fig. \ref{figure_coproduct}) of two horizontal line operators $L_1$ and $L_2$ in arbitrary representations
and one vertical line operator
in the fundamental representation, with chosen incoming and outgoing states on the vertical line.
We let $T^i_j(z, L_1 \otimes L_2)$ denote the operator that acts in this situation on the composite line operator obtained from fusing the two horizontal lines, and we write
 $T^i_j(z,L_1)$ and $T^i_j(z,L_2)$ for the corresponding operators on the individual Wilson lines. We let $\mc{H}_{L_i}$, $i=1,2$  and $\mc{H}_{L_1 \otimes L_2}$ denote the Hilbert spaces at the end of the individual horizontal Wilson lines and at the end of the fused Wilson line.  We have 
\begin{equation}
\mc{H}_{L_1 \otimes L_2} = \mc{H}_{L_1} \otimes \mc{H}_{L_2} \;.
\end{equation}
 The operators $T^i_j(z,L_i)$ and $T^i_j(z,L_1 \otimes L_2)$ are linear operators on the spaces $\mc{H}_{L_i}$, $\mc{H}_{L_1  \otimes L_2}$ respectively.

In \fig \ref{figure_coproduct},  we can move  the two horizontal Wilson lines in the vertical direction without changing anything, as long as the horizontal lines do not cross. When the lines are close together, the operator described by \fig \ref{figure_coproduct} is  $T^i_j(z,L_1 \otimes L_2)$.  When the lines are far apart, we can decompose the operator by summing over intermediate states placed on the vertical segment between the two horizontal lines. This yields $\sum_k T^i_k(z,L_1)\otimes T^k_j(z,L_2)$, where the sum over $k$ comes
from the fact that the segment of the vertical Wilson line between the two horizontal ones may carry an arbitrary label $k$.  We conclude that
\begin{equation}
T^i_j(z,L_1 \otimes L_2) = \sum_k T^i_k(z,L_1) \otimes T^k_j(z,L_2)\;.
\end{equation}
In the language of algebra, this identity tells us that the coproduct on the Yangian algebra (defined by the RTT  relation \eqref{RTT_glN}) is
\begin{equation}
\tr T^i_j(z) = \sum_k T^i_k(z) \otimes T^k_j(z) \;.
\end{equation}  
In terms of the expansion
\begin{equation}
T^i_j(z)  = \delta^i_j + \hbar \sum_{n \ge 0}z^{-n-1} (-1)^n  t^i_j[n] \;,
\end{equation}
the coproduct is
\begin{equation}
\label{equation_RTT_coproduct}
\tr t^i_j[n] = t^i_j[n]  \otimes 1 + 1 \otimes t^i_j[n] -  \hbar \sum_{r+s=n-1} t^i_k[r] t^k_j[s] \;.
\end{equation}
For $n=0,1$, this matches \cite{Part1}, \eqns (5.19) and (5.18) (with the structure constant given in \eqn \eqref{gl_commutation}), where we calculated the coproduct in the lowest
non-trivial order by a direct Feynman diagram calculation.  One surprising difference, however, is that the expression in \eqn \eqref{equation_RTT_coproduct} is \emph{exact}, and holds to all orders in $\hbar$.   This was not true for our calculation in \cite{Part1}, where we only performed a lowest order approximation to a Feynman diagram expansion that in principle could extend to all orders. 

\subsection{A Different Set of Generators}

There is an apparent discrepancy between the RTT presentation of the Yangian and the considerations of \cite{Part1}, section 5.  There we found that the
commutation relations of $\g[[z]]$ do not need any correction of order $\hbar$ in order to be consistent with the quantum-deformed coproduct.
Only a correction of $\hbar^2$  was needed.  In \eqn \eqref{RTT_expanded}, however, we found an $\O(\hbar)$ correction to the classical commutation relations in $\mf{gl}_N[[z]]$, with no $\O(\hbar^2)$ correction.

In this section, to reconcile the two approaches, we will see how to form a new set of generators built from linear plus quadratic expressions in the generators $t^i_j[n]$ such
that the algebra receives a correction only in order $\hbar^2$. In these new generators, the commutation relations in the algebra will not have any  quantum correction of
order $\hbar$; the correction of order $\hbar^2$  will match that studied in
\cite{Part1}, sections 5.4  and 8. The coproduct in these new generators will agree with \eqn \eqref{equation_RTT_coproduct} modulo $\hbar^2$, but will have higher-order corrections.  

The new generators are defined by the transformation
\begin{equation}
\til{t}^k_l[n] = t^k_l[n] + \hbar \sum_{r+s=n-1} t^\alpha_l[r] t^k_\alpha[s] \;.
\end{equation}
We will find that modulo $\hbar^2$, the commutator between two of the generators $\til{t}^k_l[n]$ is given by the classical commutation relations in the algebra $\mf{gl}_n[[z]]$. 

To show this, we first calculate that 
\begin{align}
\begin{split}
[\til{t}^k_l[n], \til{t}^i_j[m]] = &\delta^i_l t^k_j[n+m] - \delta^k_j t^i_l[n+m] \\
& -  \hbar   \sum_{r+1 \le m,n } \left(t^i_l[r] t^k_j[m+n-1-r] - t^i_l[m+n-1-r] t^k_j[r]  \right) \\
&+ \hbar  \sum_{r+s = n-1}\left(\delta^i_l  t^\alpha_j[m+r]t^k_\alpha[s]  -  t^i_l[m+r] t^k_j[s]\right) \\
&+ \hbar  \sum_{r+s=n-1} \left( t^i_l[r] t^k_j[s+m]- \delta^k_j t^\alpha_l[r]  t^i_\alpha[s+m] \right) \\ 
&+ \hbar  \sum_{r+s = m - 1}\left(  \delta^i_l t^\alpha_j[r] t^k_\alpha[n+s] - t^k_j[r]t^i_l[n+s] \right) \\
&+ \hbar   \sum_{r+s=m-1} \left(  t^k_j[n+r] t^i_l[s] - \delta^k_j t^\alpha_l[n+r]  t^i_\alpha[s]\right)  \\
&+ \O(\hbar^2)   \;.
\end{split}
\end{align}
In this expression we sum over repeated Greek indices.  The first two lines on the right are
the original commutation relations of the generators $t^i_j[n]$; the remaining lines describe the corrections coming from the difference between $\til{t}^i_j[n]$ and $t^i_j[n]$.

We can rearrange this sum to  give 
\begin{align}
\begin{split}
\left[ \til{t}^k_l[n], \til{t}^i_j[m] \right] &= \delta^i_l \til{t}^k_j[n+m] - \delta^k_j \til{t}^i_l[n+m] \\
& -  \hbar   \sum_{r+1 \le m,n } \left(t^i_l[r] t^k_j[m+n-1-r] - t^i_l[m+n-1-r] t^k_j[r]  \right) \\
&- \hbar  \sum_{r+s = n-1}\left(  t^i_l[m+r] t^k_j[s]-  t^i_l[r] t^k_j[s+m] \right) \\ 
&- \hbar  \sum_{r+s = m - 1}\left( t^k_j[r]t^i_l[n+s] -  t^k_j[n+r] t^i_l[s] \right) \\
& +  \O(\hbar^2) \;.
\label{equation_RTT_change} 
\end{split}
\end{align}
Noting that
\begin{equation} 
  t^k_j[r]t^i_l[n+s] -  t^k_j[n+r] t^i_l[s] =  t^i_l[n+s] t^k_j[r]-  t^i_l[s] t^k_j[n+r] + \O(\hbar)  \;,
 \end{equation}
we find that all the order $\hbar$ terms in \eqn (\ref{equation_RTT_change}) cancel.   Therefore the generators $\til{t}^k_l[n]$ only have an order $\hbar^2$ correction to the classical commutation relation. 

The first true quantum correction (that cannot be removed by a change of generators) occurs at order $\hbar^2$.    We will write explicitly the order $\hbar^2$ contribution to the commutation relation between elements of type $\til{t}^i_j[1]$.  We will focus on the commutator $[\til{t}^i_j[1], \til{t}^k_l[1]]$ under the assumption that the indices satisfy $\delta^i_l = 0$, $\delta^k_j = 0$.  This does not restrict the domain of validity of the result, because any terms in the commutator that involve $\delta^i_l$ or $\delta^k_l$ can be absorbed into a redefinition of the level $2$ generators.

One can calculate that
\begin{equation} 
	\left[\til{t}^k_l[1], \til{t}^i_j[1] \right]=  \delta^i_l \, \til{t}^k_j[2] - \delta^k_j \, \til{t}^i_l[2] +\hbar^2 \sum_{\alpha,\beta} \left[ \tilde t^\alpha_l [0]\, \tilde t^k_\alpha[0],\tilde  t^\beta_j [0]  \, \tilde t^i_\beta[0]  \right] +\O(\hbar^3)  \;.
\end{equation}
   Thus  the $\tilde t$'s satisfy the commutation relations of $\gl_N[[z]]$ modulo a
correction of order $\hbar^2$. Further, the coefficient of $\hbar^2$ on the right hand side matches the order $\hbar^2$ contribution to the commutation relations computed in \cite[section 8]{Part1}, up to a term that can be re-absorbed into the definition of $\til{t}^r_s[2]$. This is expected, since as explained in \cite[section 5.4]{Part1} the coefficient of  the $\hbar^2$ term 
can be fixed by the Jacobi identity and the expression for the coproduct given in 
\eqn \eqref{equation_RTT_coproduct}. 

\section{\texorpdfstring{RTT Presentation for $Y(\mf{so}_N)$ and $Y(\mf{sp}_{2N})$}{RTT Relation for Y(so(N)) and Y(sp(2N))}}\label{RTT_so_sp}

We have seen how the Yangian for $\mf{gl}_N$ arises in a natural way from our set-up. It is natural to ask whether one can derive a similar expression for the Yangian for other groups. In this section we will do this for the Lie algebras $\mf{so}_N$ and $\mf{sp}_{2N}$.   

Let us start by defining some  RTT relations for a general simple Lie algebra
$\mathfrak{g}$ and an arbitrary irreducible representation $V$, along the lines of the definition given in section \ref{RTT_gl} for the fundamental representation of $\mf{gl}_N$.   We assume that the Wilson line defined classically using $V$ lifts to a quantum Wilson line.  
   
Choose a basis $e_i$ of $V$.  The RTT algebra is generated by the coefficients of the series \eqref{T_series}.
The generators $t^i_j[n]$ are subject to the relation 
\begin{equation}\label{RTTrel}
\sum_{r,s} R^{i k}_{r s}(z - z') T^r_j(z') T^s_l(z) = \sum_{r,s} T^i_r(z) T^k_s(z')R^{rs}_{jl}(z -z') \;.
\end{equation}
where $R^{ik}_{rs}(z-z')$ are the matrix entries for the R-matrix associated to two copies of the representation $V$, one placed at $z$ and one at $z'$. 

As in the analysis in section \ref{RTT_gl}, the RTT algebra associated to the representation $V$ acts on the space of states at the end of any horizontal Wilson line. The action is defined by considering a configuration of Wilson lines like 

\begin{center}
\begin{tikzpicture}
\node(In) at (0,4) {$\langle i|$};
\node(Out) at (0,0) {$|j \rangle$};
\node(HLeft) at (-2,2) {};
\node(HRight) at (2,2) {};
\draw (HLeft) to (HRight); 
\draw[very thick, color=white] (In) to (Out);
\draw (In) to (Out);  
\node at (0.2,3) {$z$};
\end{tikzpicture}
\end{center}
where the vertical Wilson line is associated to the representation $V$ and has incoming and outgoing states $i$ and $j$. The horizontal Wilson line is associated to the representation $W$. 

When we studied the case of 
$\mathfrak{g}=\gl_N$, we saw that the operators $t^i_j[k] : W \to W$ were quantizations of the operators whereby the derivative $\tfrac{1}{k!}\partial_z^k A$ of the gauge field is coupled to the horizontal Wilson line.  The RTT relation then told us the conditions that these operators must satisfy at the quantum level in order to have a consistent Wilson line. 

We would like to be able to make this statement for a general group, but there is an immediate problem.  There are many more operators $t^i_j[k]$ then there are generators of the Lie algebra $\mf{g}[[z]]$ whose representations describe classical Wilson lines.  For each $k$,
 there are $D^2$ of these operators, where $D$ is the dimension of the representation $V$,
 while the number we want is the dimension of 
the Lie algebra $\g$. Regardless of the choice of $V$, $D^2$ will be too large (except for $\mf{gl}_N$), so more relations
 are needed to remove the extra generators.  At the classical level, the extra relations are familiar.  For example, suppose
 that $V$ is the fundamental $N$-dimensional representation of $\mf{so}_N$ or $\mf{sl}_N$.  The RTT relation alone
 will give $N^2$ generators $t^i_j[k]$, for each $k$.  At the classical level, $\mf{so}_N$ is generated by antisymmetric $N\times N$
 matrices and $\mf{sl}_N$ by traceless ones, so the extra relations on $t^i_j[k]$ are $t^i_j[k]+t^j_i[k]=0$ for $\mf{so}_N$,
 or $\sum_i t^i_i[k]=0$ for $\mf{sl}_N$. 

The goal of this section is to see, from the point of view of field theory, how to introduce extra relations which will remove the redundant generators in the case of the Yangian algebras of
$\so_N$ and $\sp_{2N}$.   Other Lie algebras are considered in sections \ref{RTT_sl} and \ref{RTTexc}. 
  As a matter of terminology, we will refer to eqn. (\ref{RTTrel}) as the
RTT relation, while a set of relations including this one that gives a complete description of an algebra will be called an RTT presentation of that algebra.

Let $V$ be the fundamental representation of $\mf{so}_N$ or $\mf{sp}_{2N}$, equipped with its symmetric or antisymmetric invariant pairing. 
We can consider a curved Wilson line in the $xy$ plane labeled by this representation, but when we do that, we have to take into account the framing anomaly.
In \fig \ref{figure_curved}, we depict a curved Wilson line in the representation $V$, which is curved so that asymptotically both of its ends run vertically downwards.  The
endpoints of this line are labeled by states that are basis vectors of $V$.  Note that, because the representation $V$ is self-dual, each end is labeled by the same representation
$V$.   This line crosses in the horizontal direction a Wilson line labeled by an arbitrary representation $W$.  
The operator that acts on the representation $W$ is, as shown in the figure,
 the product $ T^k_i(z) T^l_j(z + \hbar \sh^\vee) \omega_{kl}$, where $\omega_{kl}$ indicates the pairing on $V$ 
 (which is either symmetric or antisymmetric depending on whether we work with $\mf{so}_N$ or $\mf{sp}_{2N}$). The shift in one factor from $z$ to $z+\hbar\sh^\vee$
 reflects the framing anomaly.

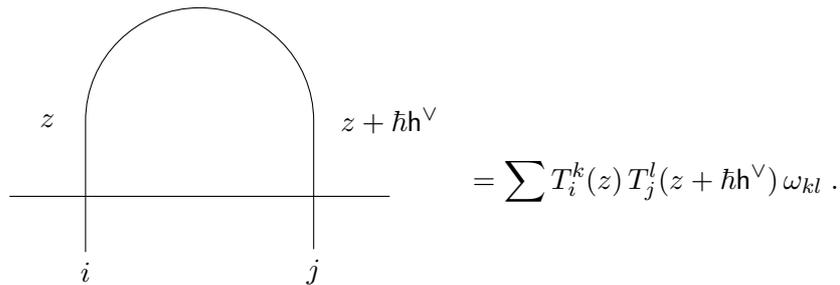
\begin{figure}[htbp]
\begin{center}
\begin{tikzpicture}
\node(N1) at (-1.5,-2) {$i$};
\node(N2) at (1.5,-2) {$j$};
\draw (N1) to (-1.5,0) to [out=90,in=180] (0,1.5) to [out=0,in = 90] (1.5,0) to (N2);
\node at (-2,0) {$z$}; 
\node at (2.5,0) {$z + \hbar \sh^\vee$};
\draw (-2.5,-1) to (2.5,-1);
\node at (6,-0.8) {$= \displaystyle\sum T^k_i(z) \, T^l_j(z + \hbar \sh^\vee) \, \omega_{kl} \;. $}; 
\end{tikzpicture}
\caption{\small{A curved Wilson line in representation $V$, with its ends labeled by basis vectors $i$ and $j$,  crossing a horizontal Wilson line in an arbitrary representation $W$.  The induced operator acting on
$W$ can be written as shown in terms of a product of $R$-matrices.  In doing so, one has to take into account the framing anomaly.   }}
\label{figure_curved}
\end{center}
\end{figure}

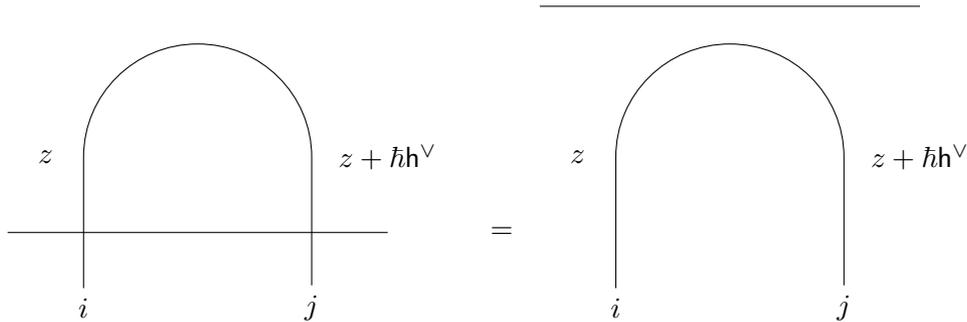
\begin{figure}[htbp]
\begin{center}
\begin{tikzpicture}
\node(N1) at (-1.5,-2) {$i$};
\node(N2) at (1.5,-2) {$j$};
\draw (N1) to (-1.5,0) to [out=90,in=180] (0,1.5) to [out=0,in = 90] (1.5,0) to (N2);
\node at (-2,0) {$z$}; 
\node at (2.5,0) {$z + \hbar \sh^\vee$};
\draw (-2.5,-1) to (2.5,-1);
\node at (4,-1) {$=$};
\begin{scope}[shift={(7,0)}]
\node(N1) at (-1.5,-2) {$i$};
\node(N2) at (1.5,-2) {$j$};
\draw (N1) to (-1.5,0) to [out=90,in=180] (0,1.5) to [out=0,in = 90] (1.5,0) to (N2);
\node at (-2,0) {$z$}; 
\node at (2.5,0) {$z + \hbar \sh^\vee$};
\draw (-2.5,2) to (2.5,2);
\end{scope}
\end{tikzpicture}
\caption{\small{The straight horizontal Wilson line can be moved in the vertical direction to avoid intersecting the curved  and asymptotically vertical one. }}
\label{moving}
\end{center}
\end{figure}

Now as in \fig \ref{moving}, we can move the horizontal Wilson line vertically so that it no longer intersects the curved one.  The equivalence of the two
configurations gives us the identity 
\begin{equation}
\sum T^k_i\left(z\right) T^l_j\left(z + \hbar \sh^\vee\right)  \omega_{kl}  = \omega_{ij} \label{equation_rtt_sosp}  \;.
\end{equation}

If we specialize to $\mf{so}_N$ with its fundamental representation, we find  
\begin{equation}
\sum_{k}  T^{k}_{i}\left(z \right) T^{k}_{j}\left(z + (N-2) \hbar \right)= \delta_{i j}\;, \label{relation_so} 
\end{equation}
where we use the fact that the dual Coxeter number $\sh^\vee$ of $\mf{so}_N$ is $N-2$ for $N>4$.  

If we write this expression in terms of the generators $t^i_j[r]$, we find that the coefficient of $\hbar z^{-r-1}$ gives us the relation
\begin{equation}\label{ridi}
t^{i_1}_{i_0}[r] +  t^{i_0}_{i_1}[r] = 0  \quad(\text{modulo } \hbar) \;,
\end{equation} 
so that the operators $t^i_j[r]$ are skew-symmetric.  This tells us that, modulo $\hbar$, the algebra obtained from supplementing the RTT relation with eqn.\ \eqref{relation_so} is the universal enveloping algebra of $\mf{so}_N[[z]]$.   Concretely, the RTT relation alone would give us an algebra of the size of $\gl_N[[z]]$  (but with a different algebra structure\footnote{The different algebra structure appears even semi-classically because the $R$-matrix involves the Casimir for $\mf{so}_N$ instead of $\gl_N$.}).  The new relation (\ref{ridi}) reduces this to $\so_N[[z]]$.

We can similarly consider $\mf{sp}_{2N}$. Let $V$ denote the $2N$ dimensional vector representation, and choose a Darboux basis $e_i$ of $V$ with pairing $\omega_{ij}$. The dual Coxeter number of $\mf{sp}_{2N}$ is $N+1$.  Eqn. \eqref{equation_rtt_sosp} takes the form
\begin{equation}
\sum T^k_i\left(z\right) T^l_j\left(z + \hbar (N+1)\right)  \omega_{kl}  = \omega_{ij} \;.
\end{equation} 
This relation implies that classically the generators $t^i_j[r]$ form a copy of the algebra $\mf{sp}_{2N}[[z]]$. 

These relations are known in the literature \cite{Drinfeld_Hopf,Drinfeld_ICM} to give presentations of the Yangians for $\mf{so}_N$ and of $\mf{sp}_{2N}$.  


\section{\texorpdfstring{RTT Presentation for $Y(\sl_N)$}{RTT Relation for Y(sl(N))}}
\label{RTT_sl}

We have seen in section \ref{RTT_gl} how to describe the Yangian algebra associated to the group $\gl_N$, using the RTT relation.  We have also derived in section \ref{RTT_so_sp} extra relations that we can add to the RTT relation to get a presentation of the Yangian algebra associated to $\so_N$ and $\sp_{2N}$. In this section we will use the results from \cite{Part1}, section
7 on networks of Wilson lines to derive an extra relation that one can add to the Yangian of $\gl_N$ to get the Yangian for $\sl_N$.    

We will find in general that there is always an extra relation that we can add to the RTT relation whenever we have a vertex connecting multiple copies of the same Wilson line. To start with, we will describe this extra relation.

Let us consider the RTT algebra associated to a Lie algebra $\mathfrak{g}$ and a representation $V$ as defined in section \ref{RTT_so_sp}.  We assume that $V$ lifts to a Wilson line at the quantum level.  Suppose that  $v \in  V^{\otimes n}$ is a $\g$-invariant vector, and that we can quantize $v$ to a vertex linking $n$ copies of the Wilson line. For simplicity we will assume that $v$ is cyclically invariant or anti-invariant, so that the angles between successive Wilson lines are $2 \pi / n$. 

\begin{figure}[htbp]
\begin{center}
\label{figure_anomaly_free_network}
\begin{tikzpicture}
\node[draw, circle] (central) at (0:0) {$v$};
\node(N1) at (-5,-3) {$z$};
\node(N2) at (-3,-3) {$z + \frac{1}{6} 2 \hbar \sh^\vee$};
 \node(N3) at (-1,-3) {$z + \frac{2}{6} 2 \hbar \sh^\vee$};
\node(N4) at (1,-3){$z + \frac{3}{6} 2 \hbar \sh^\vee$};
\node(N5) at (3,-3){$z + \frac{4}{6} 2 \hbar \sh^\vee$};
\node(N6) at (5,-3){$z + \frac{5}{6} 2 \hbar \sh^\vee$};
\draw[-<-] (N1) to (-5,-2) to [out=90,in=120] (central);
\draw[-<-] (N2) to (-3,-2) to [out=90,in=180] (central);
\draw[-<-] (N3) to (-1,-2) to [out=90,in=240] (central);
\draw[-<-] (N4) to (1,-2) to [out=90,in=300] (central);
\draw[-<-] (N5) to (3,-2) to [out=90,in=0] (central);
\draw[-<-] (N6) to (5,-2) to [out=90,in=60] (central);
 \end{tikzpicture}
\caption{\small{This is a diagram of a consistent quantum configuration built from an invariant tensor $v \in  V^{\otimes n}$ which is cyclically symmetric or antisymmetric,
so that angles between successive lines at the vertex are $2\pi/n$.  
  The lines emanating from the vertex have been bent so as to all run asymptotically downwards.  The framing anomaly dictates the relative values of $z$ at the bottom of the figure, where all lines are vertical.  At the vertex all lines have the same value of $z$.  Sketched here is the case $n=6$. } }\label{vertex}
\end{center}
\end{figure}
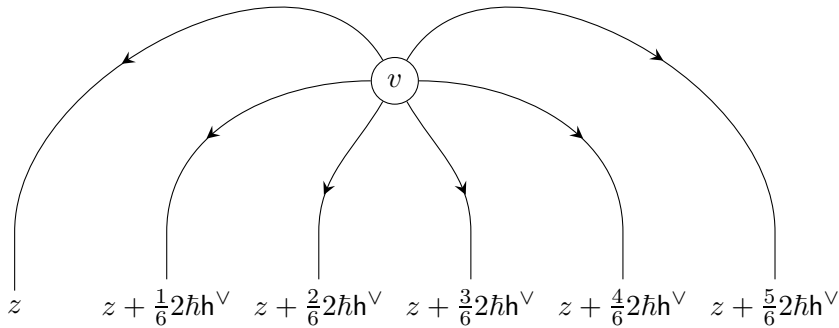 

Given such a vertex, we can bend the $n$ lines emanating from the vertex so that they all run asymptotically downwards in the $xy$ plane, as in Fig. \ref{vertex}.
As usual, when we do this, we have to take into account the framing anomaly.  Successive lines are displaced in $z$ by $(2/n)\hbar\sh^\vee$, as indicated in Fig. \ref{vertex}.
We can now consider  a horizontal Wilson line, in an arbitrary representation $W$, as in \fig \ref{figure_crossing_network}.   If the horizontal Wilson line is below the vertex
(and therefore intersects all $n$ lines that emanate from it), then the matrix acting on the representation $W$ is a product of $n$ $R$-matrices, as indicated in the figure.

\begin{figure}[htbp]
\begin{center}
\begin{tikzpicture}

\node[draw, circle] (central) at (0:0) {$v$};
\node at (-3,-2) {$z_0$};
\node at (-2,-2) {$z_1$};
\node at (-1,-2) {$z_2$};
\node at (0,-2) {$z_3$};
\node at (1,-2) {$z_4$};
\node at (2,-2) {$z_5$};

\node at (-2.5,-4) {$i_0$};
\node at (-1.5,-4) {$i_1$};
\node at (-0.5,-4) {$i_2$};
\node at (0.5,-4) {$i_3$};
\node at (1.5,-4) {$i_4$};
\node at (2.5,-4) {$i_5$};

\node at (100:0.7) {$k_0$};
\node at (160:0.7) {$k_1$};
\node at (220:0.7) {$k_2$};
\node at (280:0.7) {$k_3$};
\node at (340:0.7) {$k_4$};
\node at (35:0.7) {$k_5$};

\node(N1) at (-2.5,-3.5) {};
\node(N2) at (-1.5,-3.5){}; 
 \node(N3) at (-0.5,-3.5){}; 
\node(N4) at (0.5,-3.5){};
\node(N5) at (1.5,-3.5){};
\node(N6) at (2.5,-3.5){};
\draw[-<-] (N1) to (-2.5,-2) to [out=90,in=120] (central);
\draw[-<-] (N2) to (-1.5,-2) to [out=90,in=180] (central);
\draw[-<-] (N3) to (-0.5,-2) to [out=90,in=240] (central);
\draw[-<-] (N4) to (0.5,-2) to [out=90,in=300] (central);
\draw[-<-] (N5) to (1.5,-2) to [out=90,in=0] (central);
\draw[-<-] (N6) to (2.5,-2) to [out=90,in=60] (central);
\draw (-3.5, -2.5) to (3.5,-2.5);

\node at (4,-1.5) {$=$};
\node at (7.5,-1.5){$ \sum_{k_0,\dots,k_6} v_{k_0,\dots,k_5} T^{k_0}_{i_0}(z_0) \dots T^{k_5}_{i_5}(z_5) \;.$ };

\end{tikzpicture}
\caption{\small{This figure shows the effect of a horizontal Wilson line crossing a network of Wilson lines coming from an invariant tensor $v \in  V^{\otimes 6}$. All the vertical Wilson lines are in the representation $V$.  The framing anomaly dictates that $z_i = z_0 + \frac{i}{n} 2 \hbar \sh^\vee$.}}
\label{figure_crossing_network}
\end{center}
\end{figure}
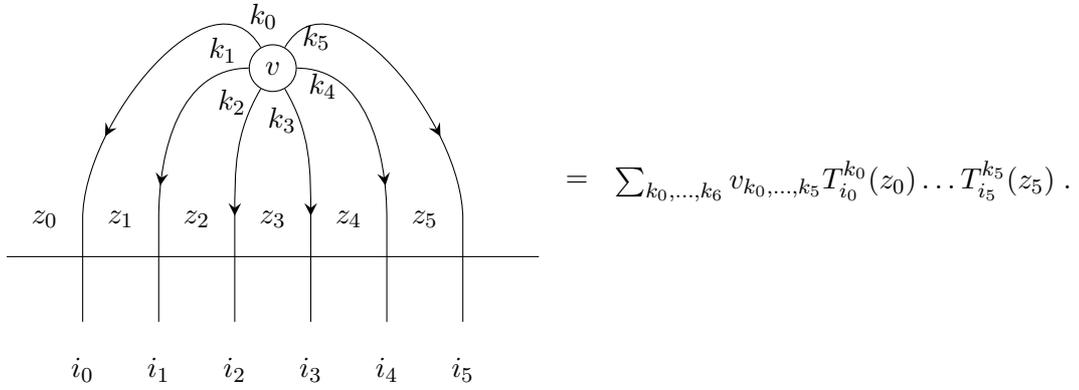
But we are free as in Fig. \ref{figure_identity} to move the horizontal Wilson line so as not to intersect the vertical network at all.    This freedom leads to the
following identity:  \begin{equation}
\sum_{k_r} v_{k_0,\dots,k_{n-1}} T^{k_0}_{i_0}\left(z \right) T^{k_1}_{i_1}\left(z + \tfrac{2}{n} \hbar \sh^\vee \right)\cdots T^{k_{n-1}}_{i_{n-1}}\left(z + \tfrac{2(n-1)}{n} \hbar \sh^\vee\right) = v_{i_0,\dots,i_{n-1}}\;. \label{relation_invariant_tensor}
\end{equation} 
Note that eqn. (\ref{equation_rtt_sosp}) can be viewed as the special case of this identity with $n=2$.

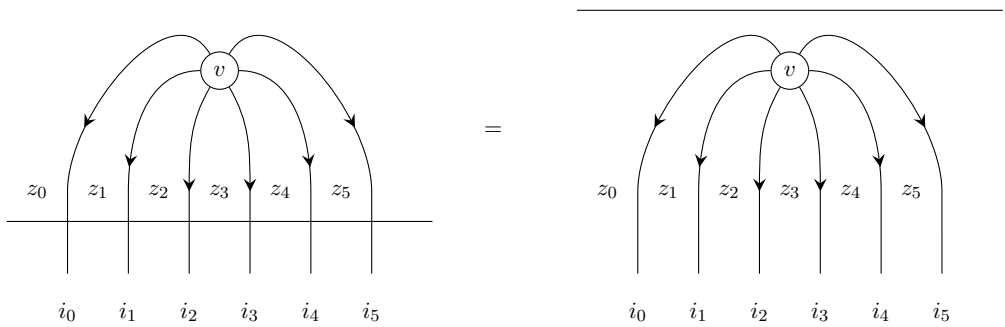
\begin{figure}
\begin{center}
\begin{tikzpicture}
\begin{scope}[scale=0.8, every node/.style={transform shape}]
\node[draw, circle] (central) at (0:0) {$v$};
\node at (-3,-2) {$z_0$};
\node at (-2,-2) {$z_1$};
\node at (-1,-2) {$z_2$};
\node at (0,-2) {$z_3$};
\node at (1,-2) {$z_4$};
\node at (2,-2) {$z_5$};
\node at (-2.5,-4) {$i_0$};
\node at (-1.5,-4) {$i_1$};
\node at (-0.5,-4) {$i_2$};
\node at (0.5,-4) {$i_3$};
\node at (1.5,-4) {$i_4$};
\node at (2.5,-4) {$i_5$};
\node(N1) at (-2.5,-3.5){};
\node(N2) at (-1.5,-3.5){}; 
 \node(N3) at (-0.5,-3.5){}; 
\node(N4) at (0.5,-3.5){};
\node(N5) at (1.5,-3.5){};
\node(N6) at (2.5,-3.5){};
\draw[-<-] (N1) to (-2.5,-2) to [out=90,in=120] (central);
\draw[-<-] (N2) to (-1.5,-2) to [out=90,in=180] (central);
\draw[-<-] (N3) to (-0.5,-2) to [out=90,in=240] (central);
\draw[-<-] (N4) to (0.5,-2) to [out=90,in=300] (central);
\draw[-<-] (N5) to (1.5,-2) to [out=90,in=0] (central);
\draw[-<-] (N6) to (2.5,-2) to [out=90,in=60] (central);
\draw (-3.5, -2.5) to (3.5,-2.5);
\node at (4.5,-1) {$=$};
\end{scope}

\begin{scope}[shift={(7.5,0)},scale=0.8,  every node/.style={transform shape}]
\node[draw, circle] (central) at (0:0) {$v$};
\node at (-3,-2) {$z_0$};
\node at (-2,-2) {$z_1$};
\node at (-1,-2) {$z_2$};
\node at (0,-2) {$z_3$};
\node at (1,-2) {$z_4$};
\node at (2,-2) {$z_5$};
\node at (-2.5,-4) {$i_0$};
\node at (-1.5,-4) {$i_1$};
\node at (-0.5,-4) {$i_2$};
\node at (0.5,-4) {$i_3$};
\node at (1.5,-4) {$i_4$};
\node at (2.5,-4) {$i_5$};
\node(N1) at (-2.5,-3.5){};
\node(N2) at (-1.5,-3.5){}; 
 \node(N3) at (-0.5,-3.5){}; 
\node(N4) at (0.5,-3.5){};
\node(N5) at (1.5,-3.5){};
\node(N6) at (2.5,-3.5){};
\draw[-<-] (N1) to (-2.5,-2) to [out=90,in=120] (central);
\draw[-<-] (N2) to (-1.5,-2) to [out=90,in=180] (central);
\draw[-<-] (N3) to (-0.5,-2) to [out=90,in=240] (central);
\draw[-<-] (N4) to (0.5,-2) to [out=90,in=300] (central);
\draw[-<-] (N5) to (1.5,-2) to [out=90,in=0] (central);
\draw[-<-] (N6) to (2.5,-2) to [out=90,in=60] (central);
\draw (-3.5, 1) to (3.5,1);
\end{scope}
\end{tikzpicture}
\end{center}
\caption{\small{Topological invariance allows us to move the position of the horizontal Wilson line without effecting the result\label{figure_identity} }}
\end{figure}

Let us specialize the identity \eqref{relation_invariant_tensor} to the case that the vector space $V$ which labels the vertical Wilson lines is the fundamental representation of $\sl_N$, and that $v \in V^{\otimes N}$ is a totally antisymmetric element. This configuration can be quantized,
as shown in \cite{Part1}. The dual Coxeter number $\sh^\vee$ of $\sl_N$ is $N$.  We therefore find the identity\footnote{What we are here calling $\op{Alt}(k_0,\dots,k_{N-1})$ is often written $\varepsilon_{k_0k_1\cdots k_{N-1}}$, where $\varepsilon$ is the antisymmetric
Levi-Civita tensor.} 
\begin{equation}
\sum_{k_r} \op{Alt}(k_0,\dots,k_{N-1}) T^{k_0}_{i_0}(z ) T^{k_1}_{i_1}(z + 2 \hbar )\cdots T^{k_{N-1}}_{i_{N-1}}(z + 2(N-1) \hbar ) = \op{Alt}(i_0,\dots,i_{N-1})\;,
\end{equation}  
where  $\op{Alt}$ is the alternating symbol.  

Specializing further to the case that $i_0 = 0, i_1 = 1,\dots, i_{N-1} = N-1$ (where our basis of $V = \C^N$ is $e_0,e_1,\dots, e_{N-1}$)  we find the identity 
\begin{equation}
\sum_{k_r} \op{Alt}(k_0,\dots,k_{N-1}) T^{k_0}_{0}(z ) T^{k_1}_{1}(z + 2 \hbar  )\cdots T^{k_{N-1}}_{N-1}(z + 2(N-1) \hbar ) = 1\;. \label{equation_qdet}  
\end{equation} 
In the literature \cite{Chari-Pressley}, this relation is often written as
\begin{equation}
\op{qDet}(T(z)) = 1\;,
\end{equation}
where $\op{qDet}$ is called the quantum determinant.  
It is known that this relation, together with the RTT relation, is a complete presentation of the Yangian
 for $\mf{sl}_N$ \cite{Drinfeld_original, Chari-Pressley}. 

Let us verify that this relation implies that, modulo $\hbar$, $\sum t^i_i[k] = 0$ for each $k$.  To check this, we use the expansion in \eqn \eqref{T_series}.
Then the coefficient of $z^{-n-1} \hbar$ in \eqn \eqref{equation_qdet} tells us that
\begin{equation}
\sum t^i_i[n] = 0  \quad (\text{modulo } \hbar) \;.
\end{equation}
Thus, modulo $\hbar$, the operators $t^i_j[n]$ provide a basis for the Lie algebra $\mf{sl}_N[[z]]$.  The higher-order terms in eqn. \eqref{equation_qdet}, together with the RTT
relation, provide quantum corrections.

\section{\texorpdfstring{RTT Presentation for $Y(\mathfrak{g}_2), Y(\mathfrak{f}_4), Y(\mathfrak{e}_6)$ and $Y(\mathfrak{e}_7)$}{RTT Presentation for Y(g(2)), Y(f(4)), Y(e(6)) and Y(e(7))}}\label{RTTexc}


In our analysis of the RTT presentation of the Yangian for $\mf{sl}_N$, we saw how the vertex connecting $n$ copies of the fundamental representation of $\mf{sl}_N$ led to an extra relation in the RTT algebra: the quantum determinant.  In a similar way, we will use the cubic invariant tensor in the fundamental representation of $\gtwo$, $\ffour$ and $\esix$ to give an RTT presentation of these algebras. For $\eseven$ we will use the invariant tensor in the fourth power of the fundamental representation.

Unfortunately, we were unable to find an RTT presentation of $\eeight$, because this algebra does not have a representation with convenient properties.
The lowest-dimensional non-trivial representation of $\eeight$ is the adjoint representation
$\mathbf{248}$.  The corresponding Wilson line does not quantize: the two-loop anomaly studied in \cite{Part1}, section 8, does not vanish for this representation.  
The direct sum $\mathbf{248}\oplus \mathbf{1}$ of the adjoint with a trivial representation does quantize, but the methods of \cite{Part1} do not lead to an easy construction of
vertices for this representation, as one would wish in order to construct an RTT presentation.   Considerations of integrable $S$-matrices suggest that suitable vertices exist, but we will
not explore this direction.

Instead, let us start with $\gtwo$, $\ffour$  and $\esix$, where the methods of this paper do apply conveniently.  Let $V$ be the lowest-dimensional representation of one of these groups, so that $V$ is either the $\mathbf{7}$ of $\gtwo$, the $\mathbf{26}$ of $\ffour$, or the $\mathbf{27}$ of $\esix$. 
In  each case there is an essentially unique invariant  $v\in V^{\otimes 3}$, which moreover  is cyclically invariant or anti-invariant.
 We have already observed in \cite{Part1}, section 7.6, that in each case these representations quantize to line operators in our theory, and that likewise the invariant $v$
 quantizes to a trivalent vertex with relative angles $2\pi/3$.
 
 We will consider the three cases in turn, and then move on to $\eseven$.

\subsection{\texorpdfstring{$Y(\gtwo)$}{Y(g(2))}}
The Lie algebra $\mf{g}_2$ is the endomorphisms of its fundamental representation $\mathbf{7}$ which preserve both the symmetric invariant pairing and the cubic antisymmetric invariant tensor.  We can quantize this description of $\mf{g}_2$ to give an RTT description of the Yangian for $\mf{g}_2$.  

As in our previous examples, the
 RTT relations come from an analysis of a vertical Wilson line in the $\mbf{7}$ crossing an arbitrary horizontal Wilson line.  Thus, the RTT algebra is generated by the coefficients of a series \eqref{T_series},
where the indices $i,j$ now run from $1$ to $7$ and indicate a basis of the $\mathbf{7}$ of $\gtwo$ which is orthonormal with respect to the $\gtwo$-invariant inner product. 

These generators are subject to the RTT relation
\begin{equation}
\sum_{r,s} R^{i k}_{r s}(z - z') T^r_j(z') T^s_l(z) = \sum_{r,s} T^i_r(z) T^k_s(z')R^{rs}_{jl}(z -z') \;,
\end{equation}
where $R$ is the $R$-matrix for the crossing of two Wilson lines in the $\mathbf{7}$.

We add two extra relations to this RTT relation. One comes from the fact that the representation $\mathbf{7}$ is self-dual. By an analysis identical to that in section \ref{RTT_so_sp}, this
 self-duality gives rise to the relation
\begin{equation}\label{igloo}
\sum_{k}  T^{k}_{i}(z ) T^{k}_{j}(z +  \hbar \sh^\vee )= \delta_{i j} \;,
\end{equation}
where $\sh^\vee = 4$ is the dual Coxeter number of $\gtwo$. 

As in our analysis of $\so_N$ and $\sp_{2N}$, if we write we write this relation (\ref{igloo})  in terms of the generators $t^i_j[r]$, we find that the coefficient of $\hbar z^{-r-1}$ gives us the relation
\begin{equation}
t^{i_1}_{i_0}[r] +  t^{i_0}_{i_1}[r] = 0 \quad (\text{modulo } \hbar) \;,
\end{equation} 
so that the operators $t^i_j[r]$ are skew-symmetric. 

Next, let us add an additional relation coming from the invariant tensor in $\mathbf{7}^{\otimes 3}$.  This relation comes as in our discussion of the quantum
determinant of $\sl_N$ from the following picture:

\begin{center}
\begin{tikzpicture}
\begin{scope}[shift={(-4,0)},scale={(0.7)}]
\node[draw, circle] (central) at (0:0) {$v$};
\node at (-1.5,-2) {$z_0$};
\node at (0.5,-2) {$z_1$};
\node at (2.5,-2) {$z_2$};

\node at (-2,-4) {$i_0$};
\node at (0,-4) {$i_1$};
\node at (2,-4) {$i_2$};
\node(N1) at (-2,-3.5) {};
\node(N2) at (0,-3.5){}; 
 \node(N3) at (2,-3.5){}; 
\node at (140:1) {$k_0$};
\node at (250:1) {$k_1$};
\node at (330:1) {$k_2$};
\draw[-<-] (N1) to (-2,-2) to [out=90,in=150] (central);
\draw[-<-] (N2) to (0,-2) to [out=90,in=270] (central);
\draw[-<-] (N3) to (2,-2) to [out=90,in=30] (central);
\draw (-3.5, -2.5) to (3.5,-2.5);
\node at (5,-1) {$=$};
 \end{scope}
\begin{scope}[shift={(2,0)},scale={(0.7)}]
\node[draw, circle] (central) at (0:0) {$v$};
\node at (-1.5,-2) {$z_0$};
\node at (0.5,-2) {$z_1$};
\node at (2.5,-2) {$z_2$};
\node at (-2,-4) {$i_0$};
\node at (0,-4) {$i_1$};
\node at (2,-4) {$i_2$};
\node(N1) at (-2,-3.5) {};
\node(N2) at (0,-3.5){}; 
\node(N3) at (2,-3.5){}; 
\draw[-<-] (N1) to (-2,-2) to [out=90,in=150] (central);
\draw[-<-] (N2) to (0,-2) to [out=90,in=270] (central);
\draw[-<-] (N3) to (2,-2) to [out=90,in=30] (central);
\draw (-3.5, 1.5) to (3.5,1.5);
\end{scope}
\end{tikzpicture}
\end{center}
The framing anomaly, together with the fact that the Wilson lines form angles of $2 \pi / 3$ at the vertex, tells us that the $z$-values of the Wilson lines (when they are vertical) must satisfy
\begin{align}
\begin{split}
z_1 &= z_0 + \tfrac{2}{3} \hbar \sh^\vee 
   = z_0 + \tfrac{8}{3} \hbar \;, \\
z_2 &= z_0 + \tfrac{4}{3} \hbar \sh^\vee 
= z_0 + \tfrac{16}{3} \hbar \;.
\end{split}
\end{align}
If $\Omega_{ijk}$ denotes the invariant tensor in $\mbf{7}^{\otimes 3}$, then this picture leads to the relation
\begin{equation}
\sum \Omega_{k_0 k_1 k_2} T_{i_0}^{k_0}(z) \, T_{i_1}^{k_1}\left(z + \tfrac{2}{3} \hbar \sh^\vee\right) T_{i_2}^{k_2}\left(z + \tfrac{4}{3} \hbar \sh^\vee\right) = \Omega_{i_0 i_1 i_2} \;.
\end{equation}
If we expand this relation out in terms of the operators $t^i_j[r]$, we find that
\begin{equation}
\Omega_{k i_1 i_2} t_{i_0}^k[r] + \Omega_{i_0 k i_2}t_{i_1}^{k}[r] + \Omega_{i_0 i_1 k} t_{i_2}^k [r] = 0 \quad (\text{modulo } \hbar) \;.
\end{equation}
This is the condition that the elements $t^i_j[r]$ preserve the tensor $\Omega$ modulo $\hbar$. Together with the fact that they are antisymmetric, it follows that modulo $\hbar$ the elements $t^i_j[r]$ live in $\mf{g}_2[[z]]$.  

As in section \ref{RTT_gl}, the RTT relation by itself would imply that the generators $t^i_j[r]$ satisfy  an algebra
with the size of $\gl_7[[z]]$.  The other relations simply restrict mod $\hbar$
to $\g_2[[z]]$.

\subsection{\texorpdfstring{$Y(\ffour)$ and $Y(\esix)$}{Y(f(4)) and Y(e(6))}}
The $\mathbf{26}$ of $\ffour$ is real, so it has a symmetric invariant bilinear pairing, and the Lie algebra $\mf{f}_4$ is the algebra of endomorphisms of the $\mathbf{26}$ which preserve this symmetric pairing and also the invariant $3$-tensor in $\mathbf{26}^{\otimes 3}$.  As in our discussion of the Yangian of $\mf{g}_2$, we find an RTT presentation of the Yangian for $\mf{f}_4$ by imposing the quantum version of these relations.  Let us choose an orthonormal basis of the representation $\mathbf{26}$, and let $\Omega_{ijk}$ denote the invariant cubic tensor in this  basis.  Then, the Yangian for $\mf{f}_4$ is generated by the coefficients of a series \eqref{T_series} 
where $i,j$ now run from $1$ to $26$.  The relations are
\begin{align}
\sum_{r,s} R^{i k}_{r s}(z - z') T^r_j(z') T^s_l(z) &= \sum_{r,s} T^i_r(z) T^k_s(z')R^{rs}_{jl}(z -z') \;, \\ 
\sum_{k}  T^{k}_{i}(z ) T^{k}_{j}(z +  \hbar \sh^\vee )&= \delta_{i j} \;, \\
\sum_{k_0,k_1,k_2} \Omega_{k_0 k_1 k_2} T_{i_0}^{k_0}(z) \, T_{i_1}^{k_1}\left(z + \tfrac{2}{3} \hbar \sh^\vee\right) &T_{i_2}^{k_2}\left(z + \tfrac{4}{3} \hbar \sh^\vee\right) = \Omega_{i_0 i_1 i_2} \;.
\end{align}
Here $R^{ik}_{rs}$ is the $R$-matrix associated to the crossing of two Wilson lines in the representation $\mathbf{26}$ of $\ffour$, and $\sh^\vee = 9$ is the dual Coxeter number of $\ffour$. 

In the classical limit $\hbar \to 0$, these relations describe the universal enveloping algebra of $\mf{f}_4[[z]]$.

For $\esix$, almost the same story holds, except that the $\mathbf{27}$ of $\esix$ is not self-dual.  Therefore we only have two relations, instead of three.  If we choose a basis for the $\mathbf{27}$ and let $\Omega_{ijk}$ denote the invariant tensor in $\mathbf{27}^{\otimes 3}$, then the Yangian for $\esix$ is generated, as always, by the coefficients of a series \eqref{T_series}
where $i,j$ run from $1$ to $27$.  The relations are
\begin{align}
&\sum_{r,s} R^{i k}_{r s}(z - z') T^r_j(z') T^s_l(z) = \sum_{r,s} T^i_r(z) T^k_s(z')R^{rs}_{jl}(z -z') \;,\\
&\sum \Omega_{k_0 k_1 k_2} T_{i_0}^{k_0}\left(z\right) T_{i_1}^{k_1}\left(z + \tfrac{2}{3} \hbar \sh^\vee \right) T_{i_2}^{k_2}\left(z + \tfrac{4}{3} \hbar \sh^\vee\right) = \Omega_{i_0 i_1 i_2} \;.
\end{align}
Since $\mf{e}_6$ consists of the endomorphisms of the representation $\mathbf{27}$ preserving the cubic invariant tensor, these relations describe the universal enveloping algebra of $\mf{e}_6[[z]]$ in the classical limit.

\subsection{\texorpdfstring{$Y(\eseven)$}{Y(e(7))}}
Classically,  $\eseven$ is the  Lie algebra of endomorphisms of its representation $\mathbf{56}$ which preserve two invariant tensors.  The first is a skew-symmetric bilinear form, and the second is a totally symmetric quartic tensor.  

In \cite{Part1}, section 7.11, we showed that the $\mathbf{56}$ of $\eseven$ defines a Wilson line at the quantum level. We found that there was a unique, up to scale, invariant tensor in $\mathbf{56}^{\otimes 4}$ which is invariant under the dihedral group $D_4$ and which defines a vertex linking four copies of the Wilson line in the $\mathbf{56}$. The angles between the four Wilson lines at the vertex are all $\pi/2$.   

Let us denote this quartic tensor by $\Omega_{ijkl}$, and the skew-symmetric bilinear form by $\omega_{ij}$.  Then by the usual reasoning, a presentation of the Yangian is provided by the coefficients of the series $T^i_j(z)$, where the indices $i,j$ run from $1$ to $56$, subject to the relations  
\begin{align}
&\sum_{r,s} R^{i k}_{r s}(z - z') T^r_j(z') T^s_l(z) = \sum_{r,s} T^i_r(z) T^k_s(z')R^{rs}_{jl}(z -z') \;,\\
	&\sum \Omega_{k_0 k_1 k_2 k_3} T_{i_0}^{k_0}\left(z\right) T_{i_1}^{k_1}\left(z + \tfrac{1}{2} \hbar \sh^\vee \right) T_{i_2}^{k_2}\left(z + \tfrac{2}{2} \hbar \sh^\vee\right) T_{i_3}^{k_3} \left(z + \tfrac{3}{2} \hbar \sh^\vee  \right) = \Omega_{i_0 i_1 i_2 i_3}\;, \\
&\sum T^k_i\left(z\right) T^l_j\left(z + \hbar \sh^\vee\right)  \omega_{kl}  = \omega_{ij} \label{equation_rtt_e7} \;.
\end{align}
The dual Coxeter number of $\eseven$ is $\sh^\vee = 18$.  

 As explained in \cite{Part1}, section 7.11, the
quantizable four-valent vertex $\Omega_{ijkl}$ of $\eseven$ is a linear combination of the essentially unique completely symmetric invariant in $\mathrm{Sym}^4\mathbf{56}$ with
a correction proportional to $\omega_{ij}\omega_{kl}+\omega_{jk}\omega_{li}$ that is needed to cancel an anomaly.   This latter tensor is not completely symmetric, but it does
have the $D_4$ symmetry of a configuration of four rays meeting in the plane at relative angles $\pi/2$.

\section{\texorpdfstring{Uniqueness of the Rational $R$-matrix}{Uniqueness of the Rational R-matrix}}\label{unirational}

\subsection{Overview}

The starting point in deriving RTT presentations of the Yangian was a presumed knowledge of the $R$-matrix.  We know from an explicit lowest order
computation in \cite{Part1} that the four-dimensional gauge theory 
generates in the lowest nontrivial order, for the case $C=\C$,  the
 standard quasiclassical $r$-matrix $c/z$, where $c=t_a\otimes t_a$.  
 We would like to know whether the full quantum $R$-matrix that comes from the gauge theory  agrees with the standard rational
 $R$-matrix as studied in the integrable systems literature.   Of course, this must be so in order for the algebras whose RTT presentations we have deduced from the gauge theory
  to agree with the usual Yangian algebras.
 
 A direct computation of the full quantum $R$-matrix from the quantum field theory would be prohibitively difficult, as it requires the computation of Feynman diagrams at all loops.  To match with the $R$-matrix appearing in the literature, we need to constrain the field theory $R$-matrix using the formal properties it satisfies.

There are two ways one might approach this problem. A somewhat abstract approach was pursued in \cite{Costello_2013}. There, it was argued that formal properties of the category of line operators are strong enough to make this category equivalent to the category of representations of the Yangian, and to match the $R$-matrix coming from field theory with that coming from the Yangian.  This relied heavily on a uniqueness result of Drinfeld \cite{Drinfeld_ICM}.

In this section we will develop a more direct way to show that the field theory $R$-matrix matches with the standard one. The properties of the field theory $R$-matrix we will use are ones we have already discussed extensively: the Yang-Baxter equation, and the supplementary equations such as the quantum determinant relation that appeared in our study of the RTT presentation.  We will show that for the lowest-dimensional representation of a simple Lie algebra which is not $\mf{e}_8$, these relations are strong enough to uniquely constrain the quantum $R$-matrix.   

Let us start with the example of the fundamental representation of $\mf{sl}_N$.  Suppose we have two different $R$-matrices 
\begin{equation}
R(z),R'(z) \in \mf{gl}_N \otimes \mf{gl}_N [[\hbar]] \;,
\end{equation}   
which are rational functions of the spectral parameter $z$ and formal power
series in $\hbar$. Suppose that each satisfies the quantum Yang-Baxter equation, and that each is a series in $\hbar$ of the form
\begin{equation}
\op{Id} \otimes \op{Id} + \frac{\hbar c}{ z} + \O(\hbar^2) \;.
\end{equation}
 We further ask that $R,R'$ are compatible with the natural symmetries: both are $\sl_N$-invariant and both are invariant under the symmetry which scales both $\hbar$ and $z$.    

We will show that if $R,R'$ both satisfy the quantum determinant equation, then they are the same. 

Let us describe the quantum determinant equation for $R, R'$.  As we have seen, given a horizontal Wilson line $W$ and a vertical Wilson line in the fundamental representation, we get an operator $T^i_j(z) : W \to W$.  In the case that the horizontal Wilson line is also in the fundamental representation, and is placed at $z = 0$, then the operator $T^i_j(z) \in \mf{gl}_N$ is nothing but the $R$-matrix for two copies of the fundamental representation (up to a shift in $z$ coming from the rotation of the horizontal Wilson line to vertical).

The operators $T^i_j(z)$ satisfy the quantum determinant equation
\begin{equation}
\sum_{k_r} \op{Alt}(k_0,\dots,k_{n-1}) T^{k_0}_{0}(z ) T^{k_1}_{1}(z + 2 \hbar  )\cdots T^{k_{n-1}}_{n-1}(z + 2(n-1) \hbar ) = 1 \;.
\end{equation}
In this equation the composition is taken in $\op{End}(W) = \mf{gl}_N$.  This equation is therefore a constraint on the $R$-matrix.

Suppose we have two $R$-matrices $R,R'$ which both satisfy the quantum determinant equation, as well as the other properties listed above.  Suppose that they agree modulo $\hbar^n$, and let us write
\begin{equation}
R' (z) = R(z) + \hbar^n r^{(n)}(z) + \hbar^{n+1}  r^{(n+1)}(z) + O(\hbar^{n+2}) \;.
\end{equation}
Since $R(z)$ satisfies the quantum determinant equation, imposing this equation for $R'(z)$ tells us that 
\begin{equation}
r^{(n),j}_i(z) \in \mf{sl}_N  \;,
\end{equation}
where we view $r^{(n),j}_i(z)$ as an modification of $T^i_j(z)$ at order $\hbar^n$.  Thus, $r^{(n)}(z) \in \mf{gl}_N \otimes \mf{sl}_N$.  Reversing the roles
of the two fundamental Wilson lines in this  argument tells us that actually $r^{(n)}(z) \in \mf{sl}_N \otimes \mf{sl}_N$. 

Next, $\sl_N$ invariance together with invariance under the symmetry which scales $\hbar$ and $z$ simultaneously tells us that
\begin{equation}
r^{(n)}(z) = \lambda z^{-n} c \;,
\end{equation}
where $c \in \mf{sl}_N \otimes \mf{sl}_N$ is the quadratic Casimir and $\lambda \in \C$ is a constant.

We need to show that $r^{(n)}(z) = 0$.  We will use the Yang-Baxter equation to show this.

The Yang-Baxter equation for $R'(z)$ state that
\begin{equation}
R'_{12}(z_1 - z_2)R'_{13}(z_1 - z_3)R'_{23}(z_2 - z_3) = R'_{23}(z_2 - z_3) R'_{13}(z_1 - z_3) R'_{12}(z_1 - z_2) \;.
\end{equation}
Let us analyze the coefficient of $\hbar^{n+1}$ in this equation.  We will use the facts that $R' = R + \hbar^n r^{(n)} + \hbar^{n+1} r^{(n+1)}+\dots$, that $R$ also satisfies the Yang-Baxter equation, and that the coefficient of  $\hbar$ in $R$ is $c z^{-1}$.  The coefficient of $\hbar^{n+1}$ in the Yang-Baxter equation for $R'$ does not impose any constraints on $r^{(n+1)}$. However it does show that $r^{(n)}$ satisfies the equation 
\begin{align}
\begin{split}
&\left[\frac{\hbar c_{12}}{z_1 - z_2} + \hbar^n r^{(n)}_{12}(z_1 - z_2),  \frac{\hbar c_{13}}{z_1 - z_3} + \hbar^n r^{(n)}_{13}(z_1 - z_3)\right] \\ 
& + \left[\frac{\hbar c_{13}}{z_1 - z_3} + \hbar^n r^{(n)}_{13}(z_1 - z_3),  \frac{\hbar c_{23}}{z_2 - z_3} + \hbar^n r^{(n)}_{23}(z_2 - z_3) \right]\\
&+ \left[\frac{\hbar c_{12}}{z_1 - z_2} + \hbar^n r^{(n)}_{12}(z_1 - z_2),  \frac{\hbar c_{23}}{z_2 - z_3} + \hbar^n r^{(n)}_{23}(z_2 - z_3) \right] = 0 \quad (\text{modulo } \hbar^{n+2}) \;.    \label{cybe} 
\end{split}
\end{align}
This equation simply says that $c z^{-1} + \eps r^{(n)}(z)$ satisfies the classical Yang-Baxter equation modulo $\eps^2$.  

Recall that $r^{(n)}(z) = \lambda c z^{-n}$ for some constant $\lambda$. We will show that equation \eqref{cybe} implies $\lambda = 0$.  One way to see this is to use Belavin-Drinfeld's classification of solutions of the classical Yang-Baxter equation.  They show that every solution of the classical Yang-Baxer equation (CYBE) in $\mf{g} \otimes \mf{g}$ for a simple Lie algebra $\mf{g}$ must have only first order poles, so that in particular we cannot deform the solution $cz^{-1}$ into a new solution by adding on a term with a higher-order pole.  

However, we can verify easily by hand that equation \ref{cybe} implies $r^{(n)}(z) = 0$.   This equation contains $6$ terms, acted on simply transitively by the symmetric group on three letters. Each term is a product of an element of $\mf{sl}_N \otimes \mf{sl}_N \otimes \mf{sl}_N$ multiplied by the function $(z_{\sigma(1)} - z_{\sigma(2)})^{-1}(z_{\sigma(2)} - z_{\sigma(3)})^{-n}$ for some $\sigma \in S_3$.  These six rational functions of three variables are linearly independent, so the coefficient for each of them must be zero.  The element in $\mf{sl}_N^{\otimes 3}$ is a permutation of $\lambda[c_{12},c_{23}]$ and these elements are all non-zero if $\lambda$ is non-zero.  Therefore, $\lambda$ must be zero, so that $r^{(n)}(z) = 0$. 

We have found, in the case of $\mf{sl}_N$, that the $R$-matrix is uniquely determined by the Yang-Baxter equation, symmetry properties, and  a constraint coming from the quantum determinant.   Accordingly, for $\sl_N$, the rational $R$-matrix coming from the gauge theory agrees with the standard one constructed in the integrable systems literature.

\subsection{\texorpdfstring{The Failure of Uniqueness of the $R$-Matrix for $\mf{gl}_N$}{The failure of uniqueness of the R-matrix for gl(N)}}

For $\mf{gl}_N$, the $R$-matrix is not uniquely constrained by the equations which it satisfies.  Without affecting the RTT relations or the Yang-Baxter equation,
we  can always multiply the $R$-matrix by an an expression of the form $\exp( f(\hbar/z))$ where $f(\hbar/z)$ is a series of the form $c_1 \hbar/z + c_2 (\hbar/z)^2 + \dots$.  The unitarity equation on the $R$-matrix forces $f$ to be an odd function of $\hbar/z$, but there are still infinitely many free parameters in the $R$-matrix.

In this section, we will see why the previous argument breaks down for $\mf{gl}_N$, and we will find the gauge-theory origin of these parameters. 

Let us analyze where the argument breaks down in the case of $\mf{gl}_N$. If we do not have the quantum determinant relation, then $r^{(k)}$ cannot be constrained to be in $\mf{sl}_N \otimes \mf{sl}_N$.  However, because of $\gl_N$ invariance and other symmetries, we find that
\begin{equation}
r^{(k)} = \lambda_1 \hbar^k z^{-k} c + \lambda_2 \hbar^k z^{-k} \op{Id} \otimes \op{Id}
\end{equation}
for two complex numbers $\lambda_1,\lambda_2$.  The Yang-Baxter equation forces $\lambda_1 = 0$.  We cannot, however, fix the ambiguity of adding on a multiple of the identity at each order in $\hbar$.  Therefore the $R$-matrix is unique up to a transformation of the form
 \begin{equation}  \label{free_transformation}
 R(z) \mapsto e^{f\left( {\hbar}/{z} \right)} R(z)\;,
 \end{equation}
 where $f$ is an arbitrary series in one variable whose leading term is zero.  Imposing the unitarity constraint $R_{21}(z) R_{12}(-z)=1$ (see \cite{Part1}, eqn. (2.3))  restricts $f$ to being an odd function.

\def\Id{{\mathrm{Id}}}
This ambiguity is visible directly in the gauge theory.   In perturbation theory around the trivial flat connection, the
 $\mf{gl}_N$ gauge theory is a product of the $\mf{sl}_N$ theory with the free $\mf{gl}_1$ theory. 
 Let us write $A_{\Id}=\frac{1}{N}\Tr\, A$ for the identity component of the gauge field $A$.  
The coupling of $A_\Id$ to  a fundamental Wilson line is far from uniquely determined. In general, to the usual coupling of $A_\Id$ in a fundamental Wilson line, we can add a coupling of the form 
\begin{equation}
	 \sum_{n \ge 0} c_n \int_\R \hbar^n \partial_z^n A_{\op{Id}} \;, \label{free_coupling} 
\end{equation}
for a sequence of constants $c_n$. 

Let $R(z) \in \mf{gl}_N \otimes \mf{gl}_N$ denote the $R$-matrix for two copies of the fundamental representation of $\gl_N$, and let $R_{\rm free}(z,c_0,c_1,\dots)$ denote the $R$-matrix for the $\gl_1$ gauge theory where a one-dimensional Wilson line is coupled by \eqn \eqref{free_coupling}.

The notation $R_{\rm free}$
reflects the fact that the $\gl_1$ theory is, in fact, free.   Then the $R$-matrix for the $\gl_N$ fundamental representation modified so that the coupling of $A_\Id$ is the usual coupling plus the expression in \eqn \eqref{free_coupling} is
 \begin{equation}
R_{\rm free}(z,c_1,c_2,\dots) R(z) \;.
\end{equation} 
The free $R$-matrix can be calculated by a Feynman diagram analysis. Since it is a calculation in a free theory, the result is just the exponential of a contribution
from single gauge boson exchange.  Contributions from a propagator going from one Wilson line to itself vanish.\footnote{\label{thisone} This follows from the fact that the propagator $\langle A_i(x,y,z,\bar z)A_j(x',y',z',\bar z')\rangle $ of this theory is antisymmetric in $i$ and $j$. See the explicit formulas in \cite{Part1}.}  The contribution from gauge
boson exchange between the two Wilson lines gives  
\begin{align}
\begin{split}
R_{\rm free}(z,c_0,c_1,\dots) &= \exp \left(\sum_{n,m \ge 0 } \hbar^{n+m+1} (-1)^m c_n c_m \partial_z^{n+m}  z^{-1}  \right)\\
	&= \exp \left( \sum_{\substack{n,m \ge  0 \\ n+m \text{ even} }} \hbar^{n+m+1} (-1)^m c_n c_m \partial_z^{n+m} z^{-1}  \right) \;,
\end{split}
\end{align}
This is an odd function of $\hbar/z$, and all odd functions of $\hbar/z$ can be constructed in this way by suitable choices of the constants $c_i$. 	
	
In this way we see that changing the way the identity component of the gauge field is coupled modifies the $R$-matrix of the fundamental representation by a transformation of the form \eqref{free_transformation}, where $f(\hbar/z)$ is an arbitrary odd function of one variable.  

Therefore \emph{every} solution to the quantum Yang-Baxter equation for the fundamental representation of $\gl_N$, which is unitary, 
$\gl_N$-invariant and compatible with the symmetry which scales $z$ and $\hbar$, arises by coupling the identity component of the gauge field to a fundamental Wilson line in a general way. 

\subsection{Other Simple Lie Algebras}

The issue that we have just discussed for $\gl_N$ does not have an analog for a simple Lie algebra.
For the other Lie algebras for which we can find an RTT presentation of the Yangian (that is, all  cases except $\eeight$), a similar argument to what we gave for $\sl_N$
shows that the $R$-matrix is uniquely fixed by various formal properties we have derived. We will phrase the statement as a proposition.
\begin{proposition}
Fix a simple Lie algebra $\mf{g}$ which is not $\eeight$, and let $V$ denote its smallest non-trivial representation.  Then,  there is a unique $R$-matrix
\begin{equation}
R(z) \in \op{End}(V) \otimes \op{End}(V)[[\hbar]] \;,
\end{equation}
which is a series in $\hbar$ whose coefficients are rational functions of $z$, and which has the following properties.
\begin{enumerate}
\item $R$ is $G$-invariant.
\item $R$ is invariant under the symmetry scaling $\hbar$ and $z$ simultaneously.
\item $R$ satisfies the Yang-Baxter equation.
\item $R$ is unitary.
\item If we view the matrix $R$ as giving an operator $T^i_j(z) : V \to V$, where $i,j$ runs over a basis of $V$, then the operators $T^i_j(z)$ satisfy the extra constraints we imposed when giving an RTT presentation of the Yangian for $\mf{g}$. 
\end{enumerate}
\end{proposition}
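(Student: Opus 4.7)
The plan is to run, for each simple Lie algebra $\mathfrak{g}$ other than $\mathfrak{e}_8$, the same inductive argument that worked for $\mathfrak{sl}_N$ in the previous subsection, with the minor adaptations needed for each group. Suppose we have two $R$-matrices $R$ and $R'$ satisfying the hypotheses of the proposition, and suppose they agree modulo $\hbar^n$, so that
\begin{equation}
R'(z) = R(z) + \hbar^n r^{(n)}(z) + O(\hbar^{n+1})\,,
\end{equation}
with $r^{(n)}(z) \in \op{End}(V)\otimes \op{End}(V)$ a rational function of $z$. The goal is to show $r^{(n)}(z)=0$, which by induction forces $R' = R$.

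First, I would exploit the extra RTT constraints to cut $r^{(n)}(z)$ down from $\op{End}(V)\otimes \op{End}(V)$ into $\mathfrak{g}\otimes\mathfrak{g}$. Since $R$ satisfies those constraints to all orders, expanding each extra relation for $R'$ at order $\hbar^n$ shows that the leading correction $r^{(n)}(z)$, viewed as a modification of the operator $T^i_j(z)$, must preserve each of the invariant tensors appearing in that relation: the symmetric (or skew) pairing coming from self-duality for $\so_N$, $\sp_{2N}$, $\gtwo$, $\ffour$, $\eseven$; the cubic invariant $\Omega_{ijk}$ for $\gtwo$, $\ffour$, $\esix$; and the quartic invariant $\Omega_{ijkl}$ for $\eseven$. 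In each case the joint stabilizer of these tensors inside $\op{End}(V)$ is precisely $\mathfrak{g}$, so one factor of $r^{(n)}(z)$ lies in $\mathfrak{g}$; reversing the roles of the two Wilson lines puts the other factor in $\mathfrak{g}$ as well, so $r^{(n)}(z) \in \mathfrak{g}\otimes \mathfrak{g}$.

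Next I would use symmetry to pin down the form of $r^{(n)}(z)$. Since $\mathfrak{g}$ is simple, the space of $\mathfrak{g}$-invariants in $\mathfrak{g}\otimes \mathfrak{g}$ is one-dimensional, spanned by the Casimir $c$. Combining $G$-invariance with the scaling symmetry that simultaneously rescales $z$ and $\hbar$ forces
\begin{equation}
r^{(n)}(z) = \lambda\, z^{-n}\, c
\end{equation}
for some constant $\lambda \in \C$. Extracting the coefficient of $\hbar^{n+1}$ in the Yang-Baxter equation for $R'$, and using that $R$ itself already satisfies Yang-Baxter and begins as $\Id\otimes \Id + \hbar c/z + \cdots$, yields a linear equation on $r^{(n)}$ which is exactly the statement that $c/z + \varepsilon r^{(n)}(z)$ solves the classical Yang-Baxter equation modulo $\varepsilon^2$. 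One then concludes $\lambda = 0$ either by invoking Belavin-Drinfeld's classification (every CYBE solution in $\mathfrak{g}\otimes\mathfrak{g}$ for simple $\mathfrak{g}$ has only simple poles, precluding a higher-order-pole deformation), or by the hands-on linear-independence argument carried out for $\sl_N$: the six rational functions of $(z_1,z_2,z_3)$ appearing are linearly independent, and the tensor coefficients are nonzero permutations of $\lambda [c_{12},c_{23}]$, which is nonzero whenever $\lambda \neq 0$.

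The main obstacle is the first step, which is where the argument is genuinely group-dependent: one needs to know that the quantum RTT constraints imposed in sections \ref{RTT_so_sp}--\ref{RTTexc} really do, at leading order in $\hbar$, force $r^{(n)}(z)$ to preserve each of the relevant invariant tensors, and one needs the classical fact that $\mathfrak{g}$ is exactly the simultaneous stabilizer of those tensors inside $\op{End}(V)$. For the minuscule/fundamental representations chosen here -- the vector for $\so_N$ and $\sp_{2N}$, the fundamental for $\sl_N$, the $\mathbf{7}$, $\mathbf{26}$, $\mathbf{27}$, $\mathbf{56}$ for $\gtwo, \ffour, \esix, \eseven$ -- this stabilizer characterization is standard, but it is precisely the ingredient that is missing for $\eeight$, whose adjoint representation has $\mathfrak{e}_8 \subsetneq \op{End}(\mathbf{248})$ cut out in a less convenient way. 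Everything beyond this reduction step is formally identical to the $\sl_N$ argument.
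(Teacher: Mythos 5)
Your proposal is correct and follows essentially the same route as the paper: the extra RTT constraints force $r^{(n)}(z)\in\mf{g}\otimes\mf{g}$ because $\mf{g}$ is exactly the stabilizer in $\op{End}(V)$ of the invariant tensors underlying those constraints, after which the $\sl_N$ argument ($G$-invariance plus scaling gives $r^{(n)}=\lambda z^{-n}c$, and the order-$\hbar^{n+1}$ Yang--Baxter equation kills $\lambda$) applies verbatim. The paper's own proof is just a terser version of this reduction; your added detail on which tensors enter for each $\mf{g}$ and on Schur's lemma for the invariants in $\mf{g}\otimes\mf{g}$ is consistent with it.
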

\begin{proof}
The proof is almost identical to the one presented in the case $\mf{g} = \mf{sl}_N$. Suppose we have two such $R$-matrices, $R(z)$ and $R'(z)$, which agree modulo $\hbar^n$ and which satisfy $R'(z) = R(z) + \hbar^n r^{(n)}(z)$ modulo $\hbar^{n+1}$.    In the case of $\mf{sl}_N$, the quantum determinant constraint showed us that $r^{(n)}(z) \in \mf{sl}_N \otimes \mf{sl}_N$.  In the general case, the extra relations we add to find an RTT presentation show us that $r^{(n)}(z) \in \mf{g} \otimes \mf{g}$. This is because these relations come from a set of invariant tensors in $V$ such that $\mf{g}$ is precisely the subalgebra of $\op{End}(V)$ preserving these tensors.

Once we know that $r^{(n)}(z) \in \mf{g} \otimes \mf{g}$, the previous argument applies.
\end{proof}


\section{RTT Presentations in the Trigonometric Case}\label{trigonometric}

\subsection{Initial Steps}\label{inisteps}

In this section we will derive the RTT presentation of the quantum loop algebra from field theory.  

We will start with the simplest case, when $\g = \gl_N$. The basic setup is very similar to what we used in the rational case in section \ref{RTT_gl}, so we will
be more brief.  We consider our four-dimensional field theory on $\C \times \C^\times$, where $\C^\times$ is the complex $z$ plane with the origin omitted. The symmetry of $\C^\times$ is multiplicative, so the
$R$-matrix $R(z,z')$ will now be a function of $z/z'$, not $z-z'$ as in the rational case.

 We place an arbitrary line operator in the $x$-direction (which as usual we regard as horizontal) labeled by some representation $W$.  We also  place a vertical Wilson line in the fundamental representation of $\mf{gl}_N$  at some point $z \in \C^\times$, labeling its ends by
  incoming and outgoing states, as in the rational case. 
 Thus the initial picture is the basic one of Fig. \ref{figure_crossing}, which served as our starting point in the rational case.   There is one essential
 difference, however.  In the rational case, the vertical Wilson line decouples from the horizontal one in, and only in, the limit $z\to\infty$.  In the trigonometric case,
 there are two decoupling limits, namely $z\to 0$ and $z\to\infty$.  Accordingly, to get a full picture, we will have to consider two separate expansions of the $R$-matrix,
 and not just one as before.

As always, the $R$-matrix with chosen initial and final states on the vertical Wilson line is an operator
\begin{equation}
T^i_j(z)  : W \to W \;.
\end{equation}
Expanding near $z=0$, we get a power series of the form
\begin{equation}
T^i_j(z) = t^i_j[0] + z t^i_j[1] + \dots  \;.
\end{equation}
Each entry in this series is a linear operator on $W$. 
 (Our conventions here are slightly different than those in the rational case, in that
we do not include explicit factors of $\hbar$ in writing this expansion. Thus $t^i_j[0]=\delta^i_j+\O(\hbar)$, while $t^i_j[k]$ is of order $\hbar$ for $k>0$.)   

It is important now to remember that in section 9 of \cite{Part1}, to describe trigonometric solutions of the Yang-Baxter equation (or merely to quantize
the underlying gauge theory on $\R^2\times \C^\times$), a slightly unusual condition was placed
on the gauge field $A$ at $z=0$ and at $z=\infty$.  The condition was that $A$ is upper-triangular at $z=0$ and lower-triangular at $z=\infty$
 (there were also some additional constraints mixing the diagonal components of $A$ with a gauge field of a second copy of the Cartan subalgebra; we will analyze these 
 constraints shortly).  Since $T^i_j[z]$ comes from a quantum averaging applied to the holonomy of $A$ on a vertical line in $\R^2$ at given $z$, the
 fact that $A$ is upper triangular at $z=0$  tells us that $t^i_j[0] = 0$ for $i < j$. 

Similarly, we can expand a vertical Wilson line near $z=0$. We write the expansion in the form
\begin{equation}
S^i_j(z) = s^i_j[0] + \frac{1}{z} s^i_j[1] + \dots  \;,
\end{equation}
where $s^i_j[0] = 0$ for $i > j$.  Here we write the  $R$-matrix for the crossing of vertical and horizontal Wilson lines
as $S^i_j(z)$ if we intend an expansion near $z=\infty$, or as $T^i_j(z)$ if we intend an expansion near $z=0$.

Crossing a pair of these  vertical Wilson lines will give rise to RTT relations.  In the rational case, there was a single RTT relation. In the trigonometric case, there are three RTT relations, coming from crossing a pair of lines near $z=0$, a pair of lines near $z=\infty$, or a pair of Wilson lines of opposite types. The relations are:
\begin{align}
\begin{split}
\sum_{r,s} R^{i k}_{r s}\left(z / z'\right) T^r_j\left(z'\right) T^s_l\left(z\right) &= \sum_{r,s} T^i_r\left(z\right) T^k_s\left(z'\right)R^{rs}_{jl}\left(z/ z'\right)\;,\\
\sum_{r,s} R^{i k}_{r s}\left(z / z'\right) S^r_j\left(z'\right) S^s_l\left(z\right) &= \sum_{r,s} S^i_r\left(z\right) S^k_s\left(z'\right)R^{rs}_{jl}\left(z/ z'\right)\;, \\ 
\sum_{r,s} R^{i k}_{r s}\left(z / z'\right) T^r_j\left(z'\right) S^s_l\left(z\right) &= \sum_{r,s} S^i_r\left(z\right) T^k_s\left(z'\right)R^{rs}_{jl}\left(z/ z' \right)\;.
\end{split}
\end{align}
As in the rational case, one can derive from these relations certain commutation relations for an algebra generated by symbols $s^i_j[n]$, $t^i_j[m]$,  where $n,m\geq 0$
and also $s^i_j[0] = 0$ if $i > j$ and $t^i_j[0] = 0$ if $i < j$.  It is known \cite{Ding-Frenkel} that these RTT relations, together with one extra relation that we will give shortly, describe the \emph{quantum loop algebra} of $\mf{gl}_N$. This is a deformation of the universal enveloping algebra of $\mf{gl}_N[z,z^{-1}]$.  

The generators of the quantum loop algebra are related to the generators $t^i_j[n]$, $s^i_j[m]$ of the RTT algebra as follows.  Let $e^i_j$ denote the elementary matrix in $\mf{gl}_N$. Then, $z^n e^i_j$ corresponds to $\hbar^{-1} s^i_j[n]$ if $n > 0$ and to $\hbar^{-1} t^i_j[n]$ if $n < 0$.  If $i < j$, $z^0 e^i_j$ corresponds to $s^i_j[0]$ and if $i > j$, $z^0 e^i_j$ corresponds to $t^i_j$.

We see, however, that there are more generators in the RTT algebra than in the quantum loop algebra.  There are two possibilities corresponding to $z^0 e^i_i$, namely $t^i_i[0]$ and $s^i_i[0]$.   It turns out that there is one more relation needed between the generators $s^i_j[n]$, $t^i_j[m]$ to find an exact match with the quantum loop algebra. The extra relation is 
\begin{equation}
s^i_i  t^i_i  = 1 \;. \label{RTT_trigonometric_final}
\end{equation}
In this relation we \emph{do not} sum over the index $i$.
 
We will show that this equation holds when the horizontal line operator satisfies an additional natural constraint. As a corollary, we will find that the Hilbert space for any line operator satisfying this constraint carries a canonical action of the quantum loop algebra. 

\subsection{Finding the Extra Relation in Field Theory}
In section 9 of \cite{Part1}, in order to define the theory on $\R^2\times \C^\times$,  we included in the gauge group a second copy of the maximal torus.  Let us denote
the Lie algebra of this second copy as $\til{\mf{h}}$, and identify it with $\C^N$.  There are additional Wilson lines associated to representations of $\til{\mf{h}}$.  Since this is an Abelian Lie algebra, we need only consider one-dimensional representations, associated to linear maps from $\til{\mf{h}}$ to $\C$. 

As before, fix a horizontal Wilson line with Hilbert space $W$.   Put a vertical Wilson line at $z \in \C^\times$ associated to the one-dimensional representation of $\til{\mf{h}}$ in which the $k^{th}$ basis vector acts by the imaginary constant $i\in\C$. This vertical Wilson line gives rise to an operator 
$$
C_k(z) : W \to W \;.
$$
\begin{definition}
We say a horizontal Wilson line is \emph{admissible} if the operator $C_k(z)$ has no poles for $z \in \mbb{P}^1$.  
\end{definition}
In perturbation theory, {\it a priori} such poles would be at $z=0$ or $z=\infty$.

Let $A^i_j$ denote the components of the gauge field of our theory which live in $\mf{gl}_N$, and let $\til{A}_i$ denote the components which live in  $\til{\mf{h}}$,
the second copy of the Cartan subalgebra.

Any Wilson line which is built only from the $A^i_j$ fields is admissible.  This can be seen by an analysis of the Feynman diagrams.   The Wilson line from which we build the operator $C_k(z)$ is, classically, defined by the exponential of the integral of $i \til{A}_k$ along a line.  The cubic interaction of the theory only involves the $A^i_j$ components of the gauge field. The propagator has components connecting $\til{A}_i$ with $\til{A}_i$, $A^i_j$ with $A^j_i$, and a mixed term connecting $A^i_i$ with $\til{A}_i$.  

The mixed term is only present because the boundary conditions we use at $z = 0$ and $z = \infty$ involve setting a linear combination of $A^i_i$ and $\til{A}_i$ to zero.  The mixed term in the propagator is therefore entirely an IR effect, and does not introduce any UV singularities. More formally,  this mixed term is a smooth two-form on
the product   $(\R^2 \times\C^\times)^2$ of two copies of space-time.  

A Feynman diagram analysis then makes it clear that if we have a horizontal Wilson line which only depends on the $A^i_j$ fields, and a vertical line which only depends on the $\til{A}_i$ fields, there are no singularities when they meet in the $z$-plane. 

Let us show that for any horizontal admissible Wilson line $W$, the operators $T^i_j(z)$, $S^i_j(z)$ satisfy the relation (\ref{RTT_trigonometric_final}).  The boundary conditions for the theory are that   
\begin{align} 
\begin{split}
A^k_k + i \til{A}_k &= 0 \text{ at } z = 0\;, \\
A^k_k - i \til{A}_k &= 0 \text{ at } z = \infty \;,\\
A^i_j &= 0 \text{ at } z = 0 ~\text { if } i < j  \;,\\
A^i_j &= 0 \text{ at } z = \infty \text { if } i > j \;.
\end{split}
\end{align}
The operator $T^k_k(z = 0)$ only depends on the $A^k_k$ component of the gauge field at $z = 0$.  This is because $T^k_k(z = 0)$ is defined as a matrix element of
the path-ordered exponential of the gauge field at $z = 0$, which is upper-triangular.  Triangularity ensures that $T^k_k(z = 0)$ only depends on the $A^k_k$ component of the gauge field.  

The boundary conditions at $z = 0$ then tell us that $T^k_k(z = 0)$ can be computed in terms of a path-ordered exponential of $-i \til{A}_k$. The operator $C_k(z = 0)$ is defined as a path-ordered exponential of $i \til{A}_k$.  We thus arrive at the equation
\begin{equation}
T^k_k(z = 0) = C_k(z = 0)^{-1} : W \to W \;,
\end{equation}
which holds where $W$ is any horizontal Wilson line.

Similarly, we have
\begin{equation}
S^k_k(z = \infty) = C_k( z = \infty) : W \to W \;.
\end{equation}
If $W$ is admissible, then $C_k(z)$ has no poles and so is constant in $z$.  This leads to the equation
\begin{equation}
S^k_k(z = \infty) T^k_k(z = 0) = 1\;,
\end{equation}
which is the remaining equation defining the quantum loop algebra. 

\subsection{Coproduct in the Trigonometric Cases}
In this section we will describe the coproduct on the quantum loop algebra, using the same method we used to describe the coproduct of the Yangian.
 
As in the rational case, the coproduct tells us how the generators $t^i_j[n]$ act on a horizontal Wilson line which is obtained by fusing two parallel horizontal Wilson line.

To understand this, let us consider, just as in the corresponding discussion in the rational case,
 a configuration of  two arbitrary horizontal line operators and one vertical line operator in the fundamental representation, with incoming and outgoing states on the vertical line.  
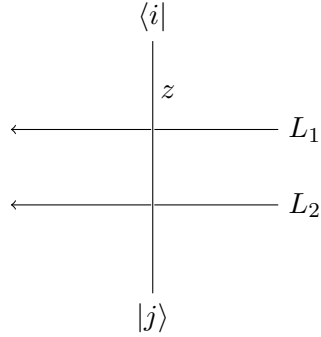
\begin{figure}[htbp]
\begin{center}
\begin{tikzpicture}
\node(In) at (0,4) {$\langle i|$};
\node(Out) at (0,0) {$|j \rangle$};
\node(HLeft) at (-2,2.5) {};
\node(HRight) at (2,2.5) {$L_1$};
\node(HLeft2) at (-2,1.5) {};
\node(HRight2) at (2,1.5) {$L_2$};
\draw[<-] (HLeft) to (HRight);
 \draw[<-] (HLeft2) to (HRight2); 
\draw[very thick, color=white] (In) to (Out);
\draw (In) to (Out);  
\node at (0.2,3) {$z$};
\end{tikzpicture}
\caption{\small{The configuration used to determine the coproduct.}}
\label{figure_coproduct_trig}
\end{center}
\end{figure}
We let $T^i_j(z, L_1 \otimes L_2)$, $S^i_j(z,L_1 \otimes L_2)$ denote the action of the operators on the composite line operator obtained from fusing the two horizontal lines.   We let $T^i_j(z,L_k)$, $S^i_j(z,L_k)$, $k=1,2$  denote the corresponding operators on the individual Wilson lines. As above, $T(z)$ is the operator we get when the vertical Wilson line is near zero, and $S(z)$ is the operator we get when the vertical Wilson line is near infinity.  We let $\mc{H}_{L_i}$, $i=1,2$ and $\mc{H}_{L_1 \otimes L_2}$ denote the Hilbert spaces at the end of the individual Wilson lines and at the end of the fused Wilson line.  Thus
\begin{equation}
\mc{H}_{L_1 \otimes L_2} = \mc{H}_{L_1} \otimes \mc{H}_{L_2} \;.
\end{equation}
The operators $T^i_j(z,L_k)$, $S^i_j(z,L_k)$ are operators on the space $\mc{H}_{L_k}$, and the operators  $T^i_j(z,L_1 \otimes L_2)$, $S^i_j(z,L_1 \otimes L_2)$  are linear operators on the spaces $\mc{H}_{L_i}$, $\mc{H}_{L_1  \otimes L_2}$ respectively.

Consider  \fig \ref{figure_coproduct_trig}. As in the discussion in the rational case,  we can move the vertical position of the two horizontal Wilson lines without changing anything (as long as the lines do not cross). When the lines are close together, the operator described by \fig \ref{figure_coproduct_trig} is  $T^i_j(z,L_1 \otimes L_2)$.  When the lines are far apart, we can decompose the operator by summing over intermediate states placed on the vertical segment between the two horizontal lines. This yields $\sum T^i_k(z,L_1)\otimes T^k_j(z,L_2)$.  We conclude that
\begin{equation}
T^i_j(z,L_1 \otimes L_2) = \sum T^i_k(z,L_1) \otimes T^k_j(z,L_2)\;,
\end{equation}
and similarly
\begin{equation}
S^i_j(z,L_1 \otimes L_2) = \sum S^i_k(z,L_1) \otimes S^k_j(z,L_2)\;.
\end{equation}
In the language of algebra, this identity tells us the coproduct on the quantum loop algebra is
\begin{align}
\begin{split}
\tr T^i_j(z) &= \sum T^i_k(z) T^k_j(z)\;,\\
\tr S^i_j(z) &= \sum S^i_k(z) S^k_j(z)\;.
\end{split}
\end{align}  
In terms of the expansion
\begin{align}
\label{TS_expansion}
\begin{split}
T^i_j(z)  &= t^i_j[0] + z t^i_j[1] + \dots\;,\\ 
S^i_j(z)  &= s^i_j[0] + \frac{1}{z} s^i_j[1] + \dots \;,
\end{split}
\end{align}
where $t^i_j[0] = 0$ for $i < j$, $s^i_j[0] = 0$ for $i > j$, and $s^i_i[0] t^i_i[0] = 1$,
the coproduct is
\begin{align}
\label{equation_coproduct_trigonometric}
\begin{split}
\tr t^i_j[n] &= \sum_{r+s=n} t^i_k[r]\otimes t^k_j[s]\;,\\ 
\tr s^i_j[n] &= \sum_{r+s=n} s^i_k[r] \otimes s^k_j[s]\;.
\end{split}
\end{align}
This is a known presentation of the coproduct on the quantum loop algebra \cite{Ding-Frenkel}. 
Note that in particular 
\begin{align}
\begin{split}
\tr t^i_i[0] &=  t^i_i[0] \otimes t^i_i[0]\;, \\
 \tr s^i_i[0] &=  s^i_i[0] \otimes s^i_i[0] \;. 
 \end{split}
\end{align}
This is consistent with the relation $s^i_i[0] t^i_i[0] = 1$ and the fact that the coproduct must be an algebra homomorphism.
  
 This concludes our discussion of $\gl_N$.  We consider $\sl_N$ next. 
  
\subsection{Quantum Determinant}

An analysis identical to that in section \ref{RTT_sl} tells us that, for the case of $\sl_N$, the following additional relations hold in the RTT algebra:
\begin{equation}
\begin{split}
\sum_{k_r} \op{Alt}\left(k_0,\dots,k_{n-1}\right) T^{k_0}_{0}\left(z \right) T^{k_1}_{1}\left(z e^{ 2 \hbar} \right)\cdots T^{k_{n-1}}_{n-1} \left(z e^{2(n-1) \hbar}  \right) = 1 \;, \\
\sum_{k_r} \op{Alt}\left(k_0,\dots,k_{n-1}\right) S^{k_0}_{0}\left(z \right) S^{k_1}_{1}\left(z e^{ 2 \hbar} \right)\cdots S^{k_{n-1}}_{n-1} \left(z e^{2(n-1) \hbar}  \right) = 1 \;.
\end{split}
\label{equation_qdet_trig} 
\end{equation}
This is deduced by moving a horizontal Wilson line above or below an $n$-fold vertex among vertical Wilson lines; the vertical Wilson lines are chosen to lie near $z=0$ or near $z=\infty$.

The only difference between this equation and the corresponding one in the rational case (\eqn \eqref{equation_qdet}) is that here we have $z e^{2 k \hbar}$ instead of $z + 2 k \hbar$ appearing in the $(k+1)$th term in the product.
As in section \ref{RTT_sl}, the shift in $z$ results from the framing anomaly.  In the multiplicative case, because the symmetry of $\C^\times $ is multiplicative,
 the framing anomaly introduces a multiplicative shift in $z$, instead of an additive one.  

The framing anomaly is a local quantity which does not depend on global features of the theory, such as boundary conditions.   Locally, we can change coordinates by setting $u = \log z$. This transforms the theory in the trigonometric setting, with action $\int(\d z/ z) \wedge \textrm{CS}(A)$, to the theory in the rational setting, with action $\int \d u \wedge \textrm{CS}(A)$.  Because the framing anomaly is local, we can compute it locally in the coordinate $u$, where we find the framing anomaly for the rational case which involves an additive shift in $u$. Since $z = \exp(u)$, the shift in $z$ will be multiplicative.

This quantum determinant relation removes the extra generators of the quantum loop algebra of $\mf{gl}_N$ that are not needed in the quantum loop algebra of $\mf{sl}_N$.

Let us see how, classically, this relation tells us that we find the generators of the universal enveloping algebra of\footnote{This is the algebra of $\sl_N$-valued functions of $z$
that are allowed to have poles at $z=0$ and at $z=\infty$.}  $\mf{sl}_N((z))$.  Let us choose slightly different generators by setting
\begin{align}
\label{til_t_def}
\begin{split}
t^i_j[n] &= \delta_{n = 0} \delta_{ij} + \hbar \,\til{t}^i_j[n] \;,\\
s^i_j[n] &= \delta_{n = 0} \delta_{ij} + \hbar \, \til{s}^i_j[n]\;.
\end{split}
\end{align}
This is a reasonable thing to do because, modulo $\hbar$, the only generators which act non-trivially on the states at the end of a horizontal Wilson line are $t^i_i[0]$ and $s^i_i[0]$, and these act by the identity. 

In addition, the fact that $s^i_i[0] t^i_i[0] = 1$ tells us that
\begin{equation}
\til{s}^i_i[0] + \til{t}^i_i[0] = 0 \;.
\end{equation} 
Finally, modulo $\hbar$, the quantum determinant relation tells us that
\begin{align}
\begin{split}
\sum t^i_i[n] &= 0 \;,\\
\sum s^i_i[n] &= 0\;.
\end{split}
\end{align}
Thus, we find the generators of $\mf{sl}_N((z))$.

\subsection{Quantum Loop Algebra for Other Lie Algebras}
In the rational case, we went beyond $\gl_N$ and $\sl_N$ to give an RTT presentation of the Yangian for all simple Lie algebras except $\eeight$.  In what follows, we will  do this
in the trigonometric case.   This requires a little more sophistication with the group theory than we have needed up to this point. 

We first need to understand
 the boundary conditions for the operators $T(z)$ and $S(z)$ in the general case. Thus, fix a simple Lie algebra $\mf{g}$ with a representation $V$ of $\mf{g}$ which quantizes to a representation of the Yangian.    Choose  a generic element $\lambda$ in the real Cartan of $\mf{g}$.  The  choice of such a $\lambda$ gives rise to an order on the set of weights, where $w > w'$ if $\lambda(w) > \lambda(w')$.   

Choose a basis of $V$ where every basis element is in a weight space.
Given a basis element $v_i$ of $V$,  we let $w_i : \mf{h} \to \C$ be the corresponding weight. 

Let us decompose $\mf{g} = \mf{n}_+ \oplus \mf{h} \oplus \mf{n}_-$ where $\mf{n}_{\pm}$ are the subspaces spanned by the positive/negative eigenspaces of $\lambda$.  Let us choose boundary conditions at $z = 0$ and $z = \infty$ based on this decomposition.  

Fix a horizontal Wilson line in a representation $W$.  Putting a vertical Wilson line at $z$ near $0$ in the representation $V$, with incoming and outgoing states $i$ and $j$, gives an operator
\begin{equation}
T^i_j(z) : W \to W \;.
\end{equation}
We need to understand what constraints the boundary conditions impose on these operators at $z = 0$.  

At $z = 0$, the gauge field only has components in $\mf{n}_+ \oplus \mf{h} \oplus \til{\mf{h}}$ where $\til{\mf{h}}$ is the second copy of the Cartan.  We will assume our representation $V$ is acted on trivially by $\til{\mf{h}}$.  The product of operators in the subalgebra $\mf{n}_+ \oplus \mf{h}$ act on the representation $V$ by matrices $A$ where $A_{ij} = 0$ if $w_i < w_j$. This implies that  $T^i_j (z = 0) = 0$ when $w_i < w_j$.  

Similarly, if $S(z)$ is the operator coming from a vertical Wilson line at $z$ near infinity, we have $S^i_j (z = \infty) = 0$ when $w_j < w_i$. 

Next, we need to understand the further relations coming from the boundary conditions on the components of the gauge field which lie in the two copies of the Cartan. 

For each weight $w$, let us define a one-dimensional representation of $\mf{g} \oplus \til{\mf{h}}$ where $\mf{g}$ acts trivially and $\til{\mf{h}}$ acts on $\C$ by $i w$.  We let 
\begin{equation}
C_w(z) : W \to W
\end{equation} 
be the operator coming from putting a vertical Wilson line in this one-dimensional representation crossing the horizontal Wilson line in the representation $W$.

 We will assume that the horizontal Wilson line is \emph{admissible}, meaning that $C_w(z)$ for each $w$ is constant as a function of $z$. As in our discussion of $\gl_N$, a sufficient condition for this is that 
 the horizontal Wilson line is not coupled to the components of the gauge field in $\til{\mf{h}}$.      

Now we can discuss the behavior of $T^i_j(z=0)$ for the case that basis vectors $v_i$ and $v_j$ have equal weights, say $w_i=w_j=w$.   This can only depend on the $\mf{h}$ valued
part of $A$ at $z=0$, and on the gauge field $\tilde A$ valued in the second copy $\tilde {\mf{h}}$ of the Cartan, because the $\mf{n}_+$-valued part of $A$ acts 
 in such a way as to strictly increase the weights.
 For each weight $w$, let $A_w$ be the one-form field obtained by applying the linear function $w: \mf{h} \to \C$ to the $\mf{h}$-valued part of the gauge field. Similarly define $\til{A}_w$ to be the one-form field where we apply $w$ to the components of the gauge field living in the second copy of the Cartan $\til{\mf{h}}$.  Then, our boundary conditions tell us that \begin{align}
\begin{split}\label{dolbo}
A_w + i \til{A}_w &= 0 \text{ at } z = 0 \;,\\
A_w - i \til{A}_w &= 0 \text{ at } z = \infty \;.
\end{split}
\end{align}
Let us now consider the restriction of $T^i_j(z=0)$ to the case that $w_i=w_j=w$.  In this space, $A$ acts via $A_w$, and the boundary condition says that at $z=0$, this is the same
as $-i\tilde A_w$.    This means that $T^i_j(z=0)$, for initial and final states of weight $w$, is the same as $\delta^i_j C_w(z=0)^{-1}$.  Here the $\delta^i_j$ comes because the action of $C_w$ depends only
on the weight; in other words, $C_w$ is a multiple of the identity in acting on states of weight $w$.  Thus
\begin{equation}\label{reback}
\left.T^i_j(z = 0)\right|_{w_i=w_j=w} =\delta^i_j  C_{w}(z = 0)^{-1} 
\end{equation} 
as operators mapping $W \to W$. Similarly,
\begin{equation}\label{eback}
\left. S^i_j(z = \infty)\right|_{ w_i=w_j=w}=\delta^i_j C_{w}(z = \infty) \;.
\end{equation}
The operator $C_w(z)$ satisfies
\begin{equation}
C_{w + w'}(z) = C_w(z) C_{w'}(z) \;.
\end{equation}
This statement follows from the statement that the fusion of the Wilson line in the representation of $\til{\mf{h}}$ labelled by a weight $w$ with that labelled by $w'$ is the Wilson line in the representation labelled by $w+w'$.  To verify this statement, we note that this is a microscopic statement which for $z$ far away from $0,\infty$ does not involve the boundary conditions. Since the gauge fields in $\mf{g}$ and those in $\til{\mf{h}}$ are only related to each other by the boundary conditions, this statement can be checked in the free gauge theory with gauge Lie algebra $\mf{h}$.  The absence of an interaction means that parallel Wilson lines in the representations $w,w'$ cannot exchange any gluons,\footnote{See footnote \ref{thisone}.}
 so that the classical fusion operation receives no quantum corrections.

Our horizontal Wilson line $W$ is assumed to be admissible, so the operator $C_w(z)$  is independent of $z$.  We therefore find that $C_{w'+w} = C_w(z) C_{w'}(z)$ for all values of $z$, including $z = 0$ and $\infty$. We can also just write $C_w$ instead of $C_w(z)$, because the operator is independent of $z$. 

Since $T^i_i(z = 0) = C_{w_i}^{-1}$, and $S^i_j(z = \infty) = C_{w_i}$, we find that 
\begin{enumerate} 
\item The operators $T^i_i(z = 0)$, $S^j_j (z = \infty)$ all commute with each other.  
\item If we form a basis of the weight lattice given by $\alpha_1,\dots,\alpha_r$, and we write $w_i = \sum \lambda_{ij} \alpha_j$ for integers $\lambda_{ij}$, then
\begin{align}
\begin{split}
T^i_i (z = 0) &= \prod_{j = 1}^{r} C_{\alpha_j}^{-\lambda_{ij}}\;,\\
S^i_i (z = \infty) &= \prod_{j = 1}^{r} C_{\alpha_j}^{\lambda_{ij}}\;.
\end{split}
\end{align} 
\end{enumerate} 
In particular, 
\begin{equation}
T^i_i(z = 0) S^i_i(z = \infty) = 1 \;.
\end{equation}

If $w_i=w_j$ but $i\not=j$, we can make similar statements using (\ref{reback}) and (\ref{eback}).  Actually, the only case of this we will need is for $w_i=w_j=0$.  In this case, $C_w=1$ so
\be\label{zobb}\left. T^i_j(z=0)\right|_{w_i=w_j=0}=\delta^i_j=\left. S^i_j(z=\infty)\right|_{w_i=w_j=0}.\ee   The reason that this is the only case we really need is that in the representations that we will actually
use for vertical Wilson lines, the weight spaces are 1-dimensional, except that the $\mathbf{26}$ of $\ffour$ has a 2-dimensional subspace with weight 0.

\subsection{\texorpdfstring{Quantum Loop Algebra for $\mf{so}_N$ and $\mf{sp}_{2N}$}{Quantum loop algebra for so(N)and sp(2N)}}
Now take $V$ to be the vector representation of $\mf{so}_N$ or $\mf{sp}_{2N}$.  Choose a basis of weight vectors of $V$, and let $\omega_{ij}$ denote the symmetric or antisymmetric pairing in this basis.  Note that $\omega_{ij}$ is only non-zero if the weights $w_i$ corresponding to the basis vectors satisfy $w_i + w_j = 0$.   We can normalize the basis so that $\omega_{ij} = 1$ if $w_i = - w_j$ and $w_i \ge 0$.  Note that for the vector representation of $\mf{so}_N$ or $\mf{sp}_{2N}$, the weight spaces are all one-dimensional.

Following our discussion in section \ref{inisteps} above and in the rational case, we find that the RTT algebra has generators given by the coefficients of two series $S^i_j(z)$, $T^i_j(z)$ where $S^i_j(z)$ is defined near $z = \infty$ and $T^i_j(z)$ is defined near $z = 0$.  We also have generators $C_w$ for $w$ a weight of the group, which are invertible and satisfy $C_{w} C_{w'} = C_{w+w'}$.   (Mathematically, these operators form a copy of the group algebra of the weight lattice.) 

The relations that we have so far are
\begin{align}\label{RTT_trigonometric_general_first}
\begin{split}
\sum_{r,s} R^{i k}_{r s}(z / z')\,  T^r_j(z') \, T^s_l(z) &= \sum_{r,s} T^i_r(z) \, T^k_s(z')\, R^{rs}_{jl}(z/ z')\;,\\
\sum_{r,s} R^{i k}_{r s}(z / z')\,  S^r_j(z')\,  S^s_l(z) &= \sum_{r,s} S^i_r(z)\,  S^k_s(z')\, R^{rs}_{jl}(z/ z')\;,\\ 
\sum_{r,s} R^{i k}_{r s}(z / z')\,  T^r_j(z') \, S^s_l(z) &= \sum_{r,s} S^i_r(z) \, T^k_s(z')\, R^{rs}_{jl}(z/ z')\;,\\
\end{split}
\end{align}
and
\begin{align}\label{RTT_trigonometric_general_last}
\begin{split}
T^i_j (z = 0) &= 0 \text{ if } w_i < w_j\;, \\ 
 S^i_j (z = \infty) &= 0 \text{ if } w_i > w_j\;, \\  
T^i_i(z = 0)  &= C_{w_i}^{-1}\;, \\
S^i_i(z = \infty)  &= C_{w_i} \;.
\end{split}
\end{align}
These relations hold for the RTT algebra associated to any representation of any simple Lie algebra.  In addition, when the representation has a pairing, we have the relations
\begin{align} \label{RTT_pairing_trig}
\begin{split}
\sum T^k_i(z) \, T^l_j\left(z e^{\hbar \sh^\vee}\right) \omega_{kl}  &= \omega_{ij} \;,\\ 
 \sum S^k_i(z) \, S^l_j\left(z e^{\hbar \sh^\vee}\right) \omega_{kl}  &= \omega_{ij}  \;.
\end{split}
\end{align}
The last two relations come from the pairing, as we discussed in the rational case for the algebras $\mf{so}_N$, $\mf{sp}_{2N}$.  The only difference with the rational case is that we use a multiplicative instead of an additive shift in $z$.  

In the relations \eqref{RTT_trigonometric_general_last},  it is essential that we use a basis given by weight vectors. If we use instead, for example, an orthonormal basis in the
case of $\mf{so}_N$, we would find  more complicated relations. 

Let us verify that, in the limit as $\hbar \to 0$, this algebra describes the universal enveloping algebra of $\mf{so}_N[z,z^{-1}]$ or $\mf{sp}_{2N}[z,z^{-1}]$.   

Let us expand $T^i_j(z)$ and $S^i_j(z)$ as in \eqn \eqref{TS_expansion}.
Since we are interested in the limit $\hbar\to 0$, 
let us denote the leading term of the $\til{t}^i_j[n], \til{s}^i_j[n]$
by  $\dtil{t}^i_j[n], \dtil{s}^i_j[n]$, so that \eqn \eqref{til_t_def}
becomes
\begin{align}
\label{dtil_t_def}
\begin{split}
t^i_j[n] &= \delta_{n = 0} \delta_{ij} + \hbar \,\dtil{t}^i_j[n]+ \O(\hbar^2) \;,\\
s^i_j[n] &= \delta_{n = 0} \delta_{ij} + \hbar \, \dtil{s}^i_j[n]+ \O(\hbar^2)\;.
\end{split}
\end{align}
The operators $\dtil{t}^i_j[n]$, $\dtil{s}^i_j[n]$  come from the exchange of a single gluon between the fundamental Wilson line and the Wilson line in the representation $W$, and so tell us how the Wilson line $W$ is coupled to the gauge field.

Similarly, when we work modulo $\hbar$ the operator $C_w$ is the identity. This is because it comes from the exchange of gluons between a vertical Wilson line in a representation of $\til{\mf{h}}$ and a horizontal Wilson line.  We can therefore expand
\begin{equation}
C_{w} = 1 + \hbar \, c_w+ \O(\hbar^2) \;.
\end{equation}

Modulo $\hbar$, we have the following equations for the operators $\dtil{t}^i_j[n]$, $\dtil{s}^i_j[n]$, and $c_w$: 
\begin{align}
\begin{split}
\dtil{s}^i_i[0] &= c_{w_i} \;,\\
[c_w,c_{w'}] &= 0\;, \\
c_{w+w'} &= c_w + c_{w'} \;,\\
 \sum \dtil{t}^k_i[n] \delta^l_j  \omega_{kl} + \delta^k_i \dtil{t}^l_j[n] \omega_{kl}  &=0 \text{ for } n \ge 0 \;,\\
 \sum \dtil{s}^k_i[n] \delta^l_j  \omega_{kl} + \delta^k_i \dtil{s}^l_j[n] \omega_{kl} &=0 \text{ for } n \ge 0 \;.
\end{split}
\end{align}
From this we see that the operators $\dtil{t}^i_i[0]$ and $\dtil{s}^i_i[0]$ are redundant, and can be replaced by a copy of the Cartan of $\mf{g}$ spanned by the operators $c_w$.  The operators $\dtil{t}^k_i[m]$ for $m > 0$ and $\dtil{s}^k_i[m]$ for $m > 0$ preserve the pairing on $V$, and so are elements of $\mf{so}_N$ or $\mf{sp}_{2N}$.  The operators $\dtil{t}^i_j[0]$ and $\dtil{s}^i_j[0]$ also preserve the pairing, and are therefore elements of $\mf{so}_N$ or $\mf{sp}_{2N}$. 

Since $\dtil{t}^i_j[0] = 0$ if $w_i < w_j$,  we see that the operators $\dtil{t}^i_j[0]$ span a copy of $\mf{n}_+$. Similarly, the operators $\dtil{s}^i_j[0]$ span a copy of $\mf{n}_-$.  In sum, we can arrange the set of generators of the algebra into a series
\begin{equation}
\sum_{m \in \Z} \alpha[m] z^m \in \mf{g}[z,z^{-1}] \;,
\end{equation}
where we identify $\alpha[m]$ with $\dtil{t}^i_j[m]$ if $m > 0$, with $\dtil{s}^i_j[-m]$ if $m < 0$. The component of $\alpha[0]$ in $\mf{n}_+$ is identified with $\dtil{t}^i_j[0]$, the component in $\mf{n}_-$ is identified with $\dtil{s}^i_j[0]$, and the component in $\mf{h}$ with the generators $c_w$.

We thus find that the set of generators of our algebra becomes, in the classical limit, the space $\mf{g}[z,z^{-1}]$. One can further compute that the relations, in the classical limit, tell us that these generators commute according to the standard commutator on the loop algebra $\mf{g}[z,z^{-1}]$. 

\subsection{Exceptional Lie Algebras}
Given what we have done so far, it is straightforward to construct the trigonometric  RTT presentation for exceptional groups $\gtwo,\ffour,\esix,\eseven$.  As in the rational case, we will build the RTT algebra from the lowest dimensional representation of the Lie algebra $\mf{g}$. We denote the representation by $V$.  As in our presentation for the groups $\mf{so}_N$ and $\mf{sp}_{2N}$, we form an RTT algebra built from generators $C_w$, $T^i_j(z)$, $S^i_j(z)$, where $i$ runs over a basis of weight vectors of the representation $V$.  
 
In every case, the relations \eqref{RTT_trigonometric_general_first}--\eqref{RTT_trigonometric_general_last} hold.  In the one case in which the representation $V$ has a weight space of dimension greater than 1
-- the zero weight subspace of the $\mathbf{26}$ of $\ffour$ --  we have the refinement (\ref{zobb}) of the last two relations in \eqref{RTT_trigonometric_general_last}.
In the cases where the representation has a pairing (that is, for the $\mathbf{7}$ of $\gtwo$, the $\mathbf{26}$ of $\ffour$ and the $\mathbf{56}$ of $\eseven$), relations \eqref{RTT_pairing_trig} hold as well. 

In addition to these relations, there are analogs of the quantum determinant associated to suitable invariant tensors.
  For the $\mathbf{7}$ of $\gtwo$, $\mathbf{26}$ of $\ffour$ and the $\mathbf{27}$ of $\esix$, there is an invariant cubic tensor which we denote $\Omega_{ijk}$.  As in the rational case, this leads to the relations 
\begin{align}
\begin{split}
\sum \Omega_{k_0 k_1 k_2} T_{i_0}^{k_0}\left(z\right) T_{i_1}^{k_1}\left(z \,  e^{ \tfrac{2}{3} \hbar \sh^\vee}\right) T_{i_2}^{k_2}\left(z\,  e^{ \tfrac{4}{3} \hbar \sh^\vee}\right) &= \Omega_{i_0 i_1 i_2}\;, \\
 \sum \Omega_{k_0 k_1 k_2} S_{i_0}^{k_0}\left(z\right)S_{i_1}^{k_1}\left(z \,  e^{ \tfrac{2}{3} \hbar \sh^\vee}\right) S_{i_2}^{k_2}\left(z\,  e^{ \tfrac{4}{3} \hbar \sh^\vee}\right) &= \Omega_{i_0 i_1 i_2}\;.  
\end{split}
\end{align}
For the $\mathbf{56}$ of $\eseven$, there is a quartic invariant tensor $\Omega_{ijkl}$ leading to the relations
\begin{align}
\begin{split}
\sum \Omega_{k_0 k_1 k_2 k_3} T_{i_0}^{k_0}\left(z\right) T_{i_1}^{k_1}\left(z \,  e^{ \tfrac{1}{2} \hbar \sh^\vee}\right) T_{i_2}^{k_2}\left(z\,  e^{ \tfrac{2}{2} \hbar \sh^\vee}\right)  T_{i_3}^{k_3}\left(z\,  e^{ \tfrac{3}{2} \hbar \sh^\vee}\right)  &= \Omega_{i_0 i_1 i_2 i_3 }\;, \\
 \sum \Omega_{k_0 k_1 k_2 k_3} S_{i_0}^{k_0}\left(z\right) S_{i_1}^{k_1}\left(z \,  e^{ \tfrac{1}{2} \hbar \sh^\vee}\right) S_{i_2}^{k_2}\left(z\,  e^{ \tfrac{2}{2} \hbar \sh^\vee}\right)  S_{i_3}^{k_3}\left(z\,  e^{ \tfrac{3}{2} \hbar \sh^\vee}\right)  &= \Omega_{i_0 i_1 i_2 i_3 }\;. 
\end{split}
\end{align}

\subsection{Comparison With Purely Three-Dimensional Chern-Simons Theory}

\def\CS{{\mathrm{CS}}}
The space $\C^\times$ admits  a $U(1)$ symmetry group  $z\to e^{i\alpha}z$, $\alpha$ real.  This is a symmetry of the action $\int_{\Sigma\times \C^\times} \frac{\d z}{z}\CS(A)$ and it makes sense
to restrict that action to $U(1)$-invariant fields.  In the process, the partial connection $A$ of the four-dimensional theory 
on $\Sigma\times \C^\times $ (which is missing a $\d  z$ term) becomes an ordinary connection on $\Sigma\times \C^\times/U(1)
=\Sigma\times\R$.  If we add to $\C^\times$ the points $z=0,\infty$, replacing $\C^\times$ by $\mathbb{CP}^1$, then the quotient becomes a closed interval $I=\mathbb{CP}^1/U(1)$.
The four-dimensional action on $\Sigma\times \C^\times $ then reduces to an ordinary three-dimensional action $4\pi i\int_{\Sigma\times I}\CS(A)$ on the three-manifold $\Sigma\times I$, with boundary conditions
that we will discuss at the endpoints of $I$.
We have arrived at a  purely three-dimensional Chern-Simons action.  Let us see how
we can exploit this fact.

Three-dimensional Chern-Simons theory with gauge group a simple Lie group $G$ is known on various grounds to be related to the quantum group deformation of the universal enveloping algebra of $\g$.
But arguably, no existing derivation of this is nearly as direct as the explanations we have given in the present paper of the relation of the four-dimensional theory to the Yangian and the quantum loop group.
What simplified the analysis in the present paper -- and the previous paper \cite{Part1} on which we drew --  is that the four-dimensional theory is infrared-free, which immediately guarantees the existence of a local procedure to evaluate the expectation of any 
configuration of Wilson lines.  Three-dimensional Chern-Simons theory is not infrared-free -- a fact that is important in many of its interesting applications, but which tends to make it difficult to give arguments
as simple as those in the present paper.

\def\At{A_{\mf t}}

However, what we have learned here suggests an interesting perspective on the purely three-dimensional case.  We start in four dimensions with gauge group $G\times \tilde T$,
where $\tilde T$ is a second copy of the maximal torus of $G$; we write $A,\tilde A$ for the $\g$ and $\tilde {\mf{t}}$-valued gauge fields. We place the same boundary conditions at $z=0$ and $z=\infty$ that we have used throughout our analysis of the quantum loop group.  Restricting to $U(1)$-invariant
fields, we get Chern-Simons theory on $\Sigma\times I$ with boundary conditions at the endpoints of $I$ that come from the conditions placed at 0 and $\infty$ in the four-dimensional theory.
Concretely,  $A$ is valued in $\mf{t}\oplus \mf{n}_+$ at one endpoint of $I$ and in $\mf{t}\oplus{n}_-$ at the other endpoint.  If $\At$ denotes the $\mf t$-valued part of $A$, one also requires, as in eqn. (\ref{dolbo}), 
that $\At+i\tilde A=0$ at one endpoint and $\At-i\tilde A=0$ at the other.

The derivation that we have given of the quantum loop group can be adapted in this situation to a purely three-dimensional derivation, with minor differences that we comment on shortly.  
Dividing by $U(1)$ has the effect of restricting to the $z$-independent generators of the quantum loop group. The algebra that we will get in the three-dimensional derivation
will be a  deformation of the universal enveloping algebra of $\g$ 
(as opposed to $\g((z))$, which we get in the derivation that starts in four dimensions).

In the reduced picture on $\Sigma\times I$, Wilson lines are associated to representations of $\g\times {\mf{t}}$ (there is no longer a $z$-dependent extension of this, since in the three-dimensional
picture $A$ and $\tilde A$ are ordinary connections).  What we call a horizontal Wilson line is  supported on a horizontal line in $\Sigma$ times a point $p\in I$.  A vertical Wilson line is supported
on a vertical line in $\Sigma$ times a point $p'\in I$.   The boundary conditions at the ends of $I$ completely break the gauge symmetry, and as a result the theory becomes infrared-free and it makes
sense to specify initial and final states at the ends of a Wilson line.
A vertical Wilson line with initial and final states $i$ and $j$ now induces a linear transformation acting on the state space of the horizontal Wilson
line.  We call this linear transformation $S^i_j$ or $T^i_j$ depending on whether $p'$ is to the left or right of $p$.   A special case of this is a vertical Wilson line, supported at $p'\in I$,
 associated to a character of $\tilde t$ (with $\g$ acting
trivially).   We say that a horizontal Wilson line at $p\in I$ is admissible if, when it crosses a special vertical Wilson line of the kind just mentioned, the resulting linear transformation is the same whether
$p'$ is to the right or left of $p$.  In particular, Wilson lines coming from representations of $\g$ (with $\tilde t$ acting trivially) are admissible.  

All arguments that we have given can be adapted in a fairly obvious way to construct a deformation of the universal enveloping algebra of $\g$ with the property that it acts on the space of states
of any admissible Wilson line.   This deformation will have a coproduct, as we discussed in the four-dimensional situation, making it into a Hopf algebra. The construction makes it manifest that the deformation in question has $T$, but not $G$, as a group of automorphisms.  This is a feature of the standard presentations of the quantum group.   

Why has this construction not been described previously?  One reason is that the above construction does not make sense in conventional Chern-Simons theory, defined with real gauge fields, a compact gauge group, and an action
that is gauge-invariant mod $2\pi \mathbb{Z}$.  The boundary conditions at the two ends of $I$ are not consistent with the pair $(A,\tilde A)$ being real, at least not if the gauge group is compact, as it is in many applications
of Chern-Simons theory.  However, in perturbation theory this is irrelevant, and actually the boundary conditions are consistent with real $(A,\tilde A)$ if we start with the split real form of $A$. (One must give
equal and opposite levels to the two factors of $G\times \tilde T$; this relative sign makes it possible to eliminate the factors of $i$ in the boundary conditions.)    Moreover,
the action in the three-dimensional reduced theory is gauge-invariant mod $2\pi \mathbb{Z}$ if the overall coefficient multiplying the
action is properly chosen.

\section{\texorpdfstring{Uniqueness of the Trigonometric $R$-matrix}{Uniqueness of the Trigonometric R-matrix}}\label{unitrigonometric}
In the rational case, we saw that the Yang-Baxter equation, together with the other relations we impose to give an RTT presentation of the Yangian, uniquely fix the $R$-matrix.  In this section we will prove a similar result in the trigonometric case.  We will find that the quantum $R$-matrix is not unique, but that all the parameters that appear have a natural explanation as parameters in the gauge theory set-up.
 
Before we turn to our analysis of the $R$-matrix, let us discuss the extra gauge-theory parameters that appear in this case.  Recall that to construct the trigonometric $R$-matrix, we studied the $4$-dimensional gauge theory with gauge group $G \times \til{H}$, where $\til{H}$ is a second copy of the Cartan.  We then chose boundary conditions at zero and infinity as follows. Let us decompose $\mf{g} = \mf{n}_+ \oplus \mf{h} \oplus \mf{n}_-$. Let 
\begin{align} 
\begin{split}
\mf{h}_0&\subset \mf{h} \oplus \til{\mf{h}}\;,\\
\mf{h}_\infty & \subset \mf{h} \oplus \til{\mf{h}}\;, 
\end{split}
\end{align}
be two complementary Lagrangian subspaces (Lagrangian with respect to the pairing which is the sum of the Killing form on $\mf{h}$ and on $\til{\mf{h}}$).  

Then, we required our gauge field to live in  
\begin{align}
\begin{split}
\mf{l}_0 &= \mf{n}_+ \oplus \mf{h}_0\;,\\
\mf{l}_\infty &= \mf{n}_+ \oplus \mf{h}_\infty\;,
\end{split}
\end{align} 
at $0$ and $\infty$ respectively.

The decomposition of $\mf{g}$ into $\mf{n}_+ \oplus \mf{h} \oplus \mf{n}_-$ is unique up to the adjoint action of $G$.  The only choice we made is that of the subspaces $\mf{h}_0$ and $\mf{h}_\infty$. 

A Lagrangian subspace in $\mf{h} \oplus \til{\mf{h}}$ is of the form
\begin{equation}
\left\{(x, i M(x) ) \mid M : \mf{h} \to \til{\mf{h}}, \ \ip{M(y), M(y')} = \ip{y,y'} \right\}\;.
\end{equation}
In other words, up to multiplication of one factor by $i$, the Lagrangian subspace is the graph of a linear isomorphism from $\mf{h}$ to $\til{\mf{h}}$ which preserves the pairing.  

If we fix an identification between $\mf{h}$ and $\til{\mf{h}}$, we see that the set of possible Lagrangian subspaces is $O(\til{\mf{h}})$, the group of isomorphisms of $\til{\mf{h}}$ preserving the pairing.  

This group acts on the field content of our theory by rotating the component of the gauge field in $\til{\mf{h}}$. The action of $O(\til{\mf{h}})$ rotates the boundary conditions at $0$ and $\infty$.  We can use this symmetry to set one of the boundary conditions, say that at $\infty$, to be the one corresponding to the identity in $O(\til{\mf{h}})$. Then the choice of boundary condition at $0$ becomes a parameter in our theory.

Let us analyze how changing the boundary condition affects the $R$-matrix.   Our previous calculation of the propagator used Lagrangians coming from the identity in $O(\til{\mf{h}})$ at $0$ and minus the identity in $O(\til{\mf{h}})$ at $\infty$.   Let us instead use a general matrix $M$ at $0$ but at infinity retain minus the identity.  For the two Lagrangians to  be transverse, the matrix $\op{Id} + M$ needs to be invertible. We let $C$ denote its inverse.    The propagator with the modified boundary conditions is 
\begin{align}  
\begin{split}
2 \pi \i \, r_M(z_1,z_2) =& \frac{1}{1 - \frac{z_1}{z_2}} \sum_\alpha X_\alpha^+ \otimes X_\alpha^-  + \frac{1}{1 - \frac{z_1}{z_2}} \sum C_{sr}  (H_r + \i M_{rk}  \til{H}_k) \otimes (H_s - \i \til{H}_s ) \;,\\ 
& - \frac{1}{1 - \frac{z_2}{z_1}} \sum_\alpha X_\alpha^- \otimes X_\alpha^+  - \frac{1}{1 - \frac{z_2}{z_1}}  \sum C_{sr} (H_s - \i \til{H}_s) \otimes (H_r + \i M_{rk} \til{H}_k )\;.
\end{split}
\end{align}
To simplify this further, we note that
\begin{align*} 
C + C^T &= (1 + M)^{-1} + (1 + M^T)^{-1} 
= (1 + M)^{-1} + (1 + M^{-1})^{-1} = 1\;. 
\end{align*}
We can therefore write
\begin{equation}
C_{rs} = \tfrac{1}{2} \delta_{rs} + C'_{rs}\;, \label{eqn_antisym}
\end{equation}
where $C'_{rs}$ is anti-symmetric.  

Inverting this procedure, we can write 
\begin{equation} 
M = \frac{1 - 2 C'}{1 + 2 C'}  \;,
 \end{equation}
where we view $C'$ as a linear operator from $\mf{h}$ to $\mf{h}$.  As long as both $1-2 C'$ and $1+2 C'$ are invertible, then $M$ is defined, invertible, and $1 + M$ is invertible.  Thus, the anti-symmetric matrices $C'$ that arise in this way are precisely those for which $\pm \tfrac{1}{2}$ are not eigenvalues.  

Using \eqref{eqn_antisym} and the fact that $C_{sr} M_{rk} = \delta_{sk} - C_{sk}$, we can further simplify the propagator to
\begin{align}
\begin{split}
 2 \pi \i \, r_M(z_1,z_2) = &  \frac{1}{1 - \frac{z_1}{z_2}} \sum_\alpha X_\alpha^+ \otimes X_\alpha^-   - \frac{1}{1 - \frac{z_2}{z_1}} \sum_\alpha X_\alpha^- \otimes X_\alpha^+ \\ 
&+  \sum \frac{z_2 + z_1}{z_2 - z_1}\tfrac{1}{2} H_r \otimes H_r 
 + \sum  H_r \otimes H_s C'_{sr} 
\\
& + \sum  \frac{z_2 + z_1}{z_2 - z_1} \tfrac{1}{2} \til{H}_r \otimes \til{H}_r - \sum C'_{rs} \til{H}_r \otimes \til{H}_s \\  
& +\i\sum \tfrac{1}{2} \delta_{rs}  \left( \til{H}_r \otimes H_s -H_s \otimes \til{H}_r\right)\\
& +\i\sum C'_{rs}  \left( \til{H}_r \otimes H_s +H_s \otimes \til{H}_r\right)\;.
\end{split}
\end{align}

We are interested in representations in which the second copy $\til{\mf{h}}$ of the Cartan acts trivially.  In such a representation, the classical $r$-matrix will be
\begin{align}
\begin{split}
2 \pi \i \, r_M(z_1,z_2) = &  \frac{1}{1 - \frac{z_1}{z_2}} \sum_\alpha X_\alpha^+ \otimes X_\alpha^-   - \frac{1}{1 - \frac{z_2}{z_1}} \sum_\alpha X_\alpha^- \otimes X_\alpha^+ \\ 
&+  \sum \frac{z_2 + z_1}{z_2 - z_1}\tfrac{1}{2} H_r \otimes H_r 
 + \sum  \left( H_r \otimes H_s \right) C'_{sr} \;. \label{trig_cybe}
\end{split}
\end{align}
In other words, in a representation where $\til{\mf{h}}$ acts trivially, we have
\begin{equation}
r_M(z_1,z_2) = r(z_1,z_2) + \frac{1}{2 \pi \i} \sum \left(H_r \otimes H_s\right) C'_{rs}\;.
\end{equation}
The effect of modifying the boundary condition is then to add an antisymmetric tensor in the Cartan $\mf{h}$ to the $r$-matrix.  This antisymmetric tensor is independent of the spectral parameter.  As we have seen above, the anti-symmetric tensors that can arise in this way are those for which $\pm \tfrac{1}{2}$ are not eigenvalues.

 So far we have analyzed what happens to the classical $R$-matrix when we change the boundary condition.  We will not give a closed-form expression for how the quantum $R$-matrix depends on the choice of boundary condition.  Instead, we analyze how the quantum $R$-matrix is affected if we change the boundary condition at order $\hbar^{k-1}$. 

The quantum $R$-matrix is built as a sum over Feynman diagrams.  Changing the boundary condition changes the propagator in the diagrams. Normally the propagator is accompanied by $\hbar$.  If we change the boundary condition at order $\hbar^{k-1}$, then we find a change in the propagator so that as well as having a coefficient of $\hbar$ it has a term with a coefficient of $\hbar^k$. This second term can be derived from our analysis above: if the matrix $M$ is $M = 1 + \hbar^{k-1} 8 \pi \i  m$, for some $m \in \mf{so}(\til{\mf{h}})$, then 
\begin{equation}
C = \tfrac{1}{2} - \hbar^{k-1} 2\pi \i m + \O(\hbar^k)  \;,
\end{equation}  
so that the new propagator is
\begin{equation}
\hbar \, r(z_1,z_2) - \hbar^{k} \sum  m_{rs}  H_r \otimes H_s\;.
\end{equation} 

Let us compute what affect this has on the quantum $R$-matrix at order $\hbar^k$.  The quantum $R$-matrix is a sum over Feynman diagrams where propagators are accompanied by $\hbar$, and vertices by $\hbar^{-1}$.  The only way the modification of the propagator can contribute to the order $\hbar^k$ term in the quantum $R$-matrix is when there is a single propagator connecting the two Wilson lines.  We therefore find that, when we change the boundary condition at order $\hbar^{k-1}$, the quantum $R$-matrix changes by
\begin{equation}
R \mapsto R - \hbar^k  \sum H_r \otimes H_s m_{rs} + \O(\hbar^{k+1}) \;.
\end{equation}
Here the anti-symmetric tensor $m_{rs}$ is arbitrary.

To sum up, we find that the parameters in the definition of the field theory affect the $R$-matrix in the following ways:
\begin{enumerate} 
\item At each order in $\hbar$, we are free to add the term $\hbar^k (\d z / z)\wedge \textrm{CS}(A)$ to the Lagrangian. (In the rational case this was forbidden by the symmetry scaling $z$ and $\hbar$).  This changes the $R$-matrix by adding the classical $R$-matrix $r^{(1)}(z)$ at order $\hbar^{k+1}$.  This change can be absorbed into a reparametrization of $\hbar$.

We can arrange such terms in a series $\sum c_k \hbar^k (\d z/ z)\wedge \textrm{CS}(A)$ where $c_k$ are the \emph{bulk coupling constants}.
\item At each order in $\hbar$,  we are free to change the boundary conditions. A change of the boundary conditions at order $\hbar^k$ adds at order $\hbar^{k+1}$ an antisymmetric tensor in $\mf{h}$ to the $R$-matrix, which is independent of the spectral parameter.  

We can arrange the choice of boundary condition in a series $\sum b_k \hbar^k$ where $b_k \in \wedge^2 \mf{h}$ and $b_0$, viewed as an operator from $\mf{h}$ to itself, does not have $\pm \tfrac{1}{2}$ as eigenvalues.   We will refer to the $b_k$ as \emph{boundary coupling constants}. 
\end{enumerate}
We will show that these parameters present in the definition of the field theory give rise to almost all possible solutions to the trigonometric $R$-matrix.

As in the rational case, any ambiguities in the solution of the quantum Yang-Baxter equation will be constrained by appealing to Belavin-Drinfeld's \cite{Belavin-Drinfeld} classification of solutions of the classical Yang-Baxter equation.  Since their classification is a little subtle in the trigonometric case, let us explain their main results.

They consider a classical $r$-matrix $r(u) \in \mf{g} \otimes \mf{g}$ which is a quasi-periodic function of a variable $u \in \C$:
\begin{equation} 
r(u + 2 \pi i) = (C \otimes 1) r(u)  \;,
 \end{equation}
where $C$ is an automorphism of the Lie algebra $\mf{g}$ of finite order. They also assume that $r(u)$ has a non-degeneracy property, which is equivalent to asking that $r(u)$ has a simple pole at $u = 0$ whose residue is the quadratic Casimir.   

They show that such solutions of the classical Yang-Baxter equation are classified by an automorphism of the Dynkin diagram of $\mf{g}$ (that is, an outer automorphism of $\mf{g}$), together with an element of the exterior square of a certain Abelian Lie algebra inside $\mf{g}$.

In the case that the automorphism of the Dynkin diagram is trivial, every such $r(u)$ is equivalent to one that is strictly periodic, $r(u + 2 \pi i ) = r(u)$.  In that case, we can view $r(u)$ as a function of $z = e^{u}$.    Then, Belavin-Drinfeld's classification tells us that there is some constant $A$ and some anti-symmetric matrix $\Gamma \in \wedge^2 \mf{h}$ such that   
\begin{equation} 
  r(z) = A\, r_{\rm standard} (z) + \Gamma_{rs} H_r \otimes H_s \;,
 \end{equation}
 with $r_{\rm standard} (z)$ some standard solution of the trigonometric CYBE. We can take for $r_{\rm standard}(z)$ the solution in \eqref{trig_cybe} with the tensor $C'$ set to zero. 
 
Note that almost all such solutions are obtained from our field theory by choicing the coupling constant and the boundary conditions appropriately.  The coupling constant can be tuned to give any value of $A$, and the classical boundary condition can be tuned to give \emph{almost} any value of $\Gamma_{rs}$. Those $\Gamma_{rs}$ that can arise are those with the property that $A^{-1} \Gamma_{rs}$, viewed as an endomorphism of $\mf{h}$, has no eigenvalues which are $\tfrac{1}{2}$ or $-\tfrac{1}{2}$. 

The remaining trigonometric solutions to the classical Yang-Baxter equation are the ones associated to a non-trivial automorphism of the Dynkin diagram $\mf{g}$. These can not be made strictly periodic in the variable $u$. Instead, in terms of the variable $z = e^u$, they can be viewed as a section of a flat bundle with fibre $\mf{g} \otimes \mf{g}$ whose monodromy is the given outer automorphism of $\mf{g}$ applied to the first factor.   These solutions do not play a role in this paper, although they can presumably be engineered by considering our four-dimensional field theory with a gauge group $G$ which has monodromy on $\C^\times$ given by an outer automorphism.   

Now let us state and prove our classification of quantum $R$-matrices in terms of field theory.
\begin{proposition}
	Let $\mf{g}$ be a simple Lie algebra which is not $\eeight$, and let $V$ be its smallest-dimensional representation. 

Consider the  solutions 
\begin{equation} 
 R(z) = 1 + \hbar\, r(z) + \dots \in \op{End}(V) \otimes \op{End}(V)[[\hbar]] 
 \end{equation}
 of the Yang-Baxter equation on $V$, satisfying the following additional properties.
\begin{enumerate}
\item  We assume $R(z)$ is, at each order in $\hbar$, a rational function of $z$ whose only poles are at $z = 1$. 
\item
We assume that $r(z)$ is a constant multiple of the $r$-matrix in \eqref{trig_cybe}, where the anti-symmetric tensor $C'_{rs}$ has eigenvalues which are not $\pm \tfrac{1}{2}$.  
\item
To $R$ we can associate an operator $T^i_j(z) : V \to V$ where $i,j$ runs over a basis of the smallest representation $V$ of $\mf{g}$.  We suppose that this operator satisfies the additional constraints we add to the RTT relation to define the Yangian. For example, for $\mf{sl}_N$, we require that this operator satisfies the quantum determinant equation
\begin{equation}
\sum_{k_r} \op{Alt}\left(k_0,\dots,k_{n-1}\right) T^{k_0}_{0} \left(z \right) T^{k_1}_{1}\left(z e^{ 2 \hbar} \right)\cdots T^{k_{n-1}}_{n-1}\left(z e^{2(n-1) \hbar}  \right) = 1 \;.
\end{equation}
\end{enumerate}

Then, there is a bijection between
\begin{enumerate}  
 \item The set of such solutions to the Yang-Baxter equation in $V$. 
\item  The possible values of the bulk and boundary coupling constants of the four-dimensional field theory on $\R^2 \times \C^\times$.  
 \end{enumerate}
\end{proposition}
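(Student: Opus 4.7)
The strategy is to mimic the uniqueness argument of the rational case (the proposition in Section~\ref{unirational}), accommodating the additional freedom introduced by the bulk coupling constants $c_k$ and boundary coupling constants $b_k$. The forward map from coupling constants to $R$-matrices is essentially built into the field theory: for any choice of sequences $(c_k), (b_k)$, Wilson-line crossings produce an $R$-matrix which solves Yang-Baxter by topological invariance in the $xy$-plane, satisfies the RTT-supplementary relations by the arguments already given (which are insensitive to these parameters), and has the required pole structure inherited from the propagator's $1/(1 - z_1/z_2)$ behaviour. The content of the proposition is therefore injectivity and surjectivity of this map, which I would establish together by a single induction on the order in $\hbar$.

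For the base case, the classical $r$-matrix $r(z)$ is constrained by the Belavin-Drinfeld classification: a non-degenerate trigonometric solution to the CYBE in $\mf{g} \otimes \mf{g}$ (with simple $\mf{g}$ and trivial Dynkin diagram automorphism, with residue the Casimir at $z=1$) is, up to conjugacy, of the form $A\, r_{\mathrm{standard}}(z) + \Gamma_{rs}\, H_r \otimes H_s$ for a scalar $A$ and an antisymmetric $\Gamma \in \wedge^2 \mf{h}$. Matching against the field theory expression \eqref{trig_cybe} fixes the leading coupling constants $c_0 = A$ and $b_0 = \Gamma$ uniquely, with the eigenvalue restriction on $C'$ corresponding precisely to non-degeneracy of the boundary Lagrangian.

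Inductive step. Suppose $R(z)$ and $R'(z)$ both satisfy the hypotheses, and that by matching coupling constants through order $\hbar^{n-1}$ we may assume they agree modulo $\hbar^{n+1}$. Write
\[ R'(z) - R(z) = \hbar^{n+1}\, r^{(n+1)}(z) + \O(\hbar^{n+2}). \]
The RTT-supplementary relations, which are exact and hold for both $R$ and $R'$, force $r^{(n+1)}(z) \in \mf{g} \otimes \mf{g}$ by the argument of the rational uniqueness proposition: these relations characterize $\mf{g}$ as the stabilizer of a set of invariant tensors in $V$, so any deformation of $T^i_j(z)$ preserving them must take values in $\mf{g} \otimes \mf{g}$. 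The Yang-Baxter equation at order $\hbar^{n+2}$ then forces $r^{(n+1)}(z)$ to satisfy the CYBE linearized about $r(z)$. The tangent space to the moduli of trigonometric CYBE solutions at $r(z)$ with the required pole structure is spanned by (i) a rescaling of $r_{\mathrm{standard}}(z)$, produced by shifting $c_n$, and (ii) a spectral-parameter-independent antisymmetric tensor in $\mf{h}$, produced by shifting $b_n$. Performing the unique compensating shift of $(c_n,b_n)$ we obtain $R \equiv R' \pmod{\hbar^{n+2}}$, closing the induction.

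The main obstacle I anticipate is this final tangent-space description, since Belavin-Drinfeld classify CYBE solutions themselves rather than infinitesimal deformations at a chosen one. I would handle it by a direct argument: the residue of $r^{(n+1)}(z)$ at $z=1$ must be $\mf{g}$-invariant in $\mf{g} \otimes \mf{g}$ (as the only other pole allowed by hypothesis is at $z=1$, and the invariance comes from $\mf{g}$-equivariance of the supplementary relations), hence proportional to the Casimir; subtracting a multiple of $r_{\mathrm{standard}}(z)$ reduces to a deformation regular at $z=1$. A weight-space analysis, using that $V$ is the smallest non-trivial representation so that non-zero weight spaces are one-dimensional, then shows that any such regular deformation is $z$-independent; antisymmetry of its Cartan component follows from the unitarity constraint together with the linearized CYBE, yielding exactly the two-parameter family needed.
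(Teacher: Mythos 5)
Your proposal is correct and follows essentially the same route as the paper's proof: an induction in powers of $\hbar$ in which the supplementary RTT relations force the order-$\hbar^{k}$ discrepancy into $\mf{g}\otimes\mf{g}$, the Yang--Baxter equation forces it to solve the linearized CYBE, and Belavin--Drinfeld then reduces it to a multiple of $r(z)$ plus an element of $\wedge^2\mf{h}$, absorbed by shifting the bulk and boundary coupling constants. The one place you go beyond the paper is in flagging that Belavin--Drinfeld classify solutions rather than first-order deformations and sketching a direct residue/weight-space argument to justify the tangent-space description; the paper simply invokes the classification at this step (mirroring its rational-case discussion, where it does supply a hands-on verification), so your added care is a legitimate refinement rather than a different approach.
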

\begin{proof}
The results we have explained so far show how to construct an $R$-matrix in $V$ from every choice of bulk and  boundary counter-terms.  We need to show that every possible choice of $R$-matrix arises in this way, and that the value of the bulk and boundary coupling constants is encoded in the $R$-matrix.

We will first consider the case $\mf{g} = \mf{sl}_N$.   We consider the coefficient $r(z)$ of $\hbar$ in the expansion of $R(z)$. By assumption, $r(z)$ can be engineered from some choice of boundary condition and coupling constants for the classical theory.  

Next, we assume by induction that, modulo $\hbar^k$, every $R$-matrix satisfying the constraints in the statement of the proposition arises from some choice of bulk counter-terms and of boundary condition in the field theory.   We will prove by induction that this also holds modulo $\hbar^{k+1}$. 

Thus we let $R$ be an $R$-matrix satisfying the conditions stated above, and we let $R'$ be an $R$-matrix engineered from the field theory so that $R = R'$ modulo $\hbar^k$.  Then, there is some $r^{(k)}(z)$ such that 
\begin{equation} 
R = R' + \hbar^k r^{(k)}(z) + \O(\hbar^{k+1}) \;.
 \end{equation}
The Yang-Baxter equation for $R(z)$ and $R'(z)$ implies, as in the rational case, that 
\begin{equation}
r(z) + \eps r^{(k)}(z)
\end{equation}
satisfies the CYBE modulo $\eps^2$. Further, applying the quantum determinant relation to each variable tells us that $r^{(k)}(z) \in \mf{sl}_N \otimes \mf{sl}_N$. 

Belavin-Drinfeld's classification shows that $r^{(k)}(z)$ is of the form
\begin{equation}
r^{(k)}(z) = C_k r(z) + \Gamma_k \;,
\end{equation}
where $\Gamma \in \wedge^2 \mf{h}$, and $C$ is a constant. We can change the bulk coupling constants of our four-dimensional gauge theory to absorb $C_k r(z)$ into $R'(z)$, and we can change the boundary conditions at order $\hbar^{k-1}$ to absorb $\Gamma_k$ into $R'(z)$. 

In this way we find that every possible $R$-matrix is uniquely represented as one coming from field theory, with an appropriate value of the coupling constants. 

If $\mf{g}$ is any other simple Lie algebra which is not $\eeight$, the only point at which the argument above is modified is that, instead of imposing the quantum determinant relation, we impose the constraints we add to the RTT relation to describe the quantum loop group for $\mf{g}$. (For example, for $\mf{so}_N$ or $\mf{sp}_{2N}$, we impose the constraint coming from the pairing on the fundamental representation. For $\mf{g}_2$, we impose the constraint coming from the pairing on the fundamental representation and the invariant antisymmetric $3$-tensor.)
\end{proof}

\noindent {\it Comment:}  To avoid confusion, we should perhaps explain the following.  In the present paper, we have considered only admissible Wilson lines.  This restriction leads to the family of solutions
of the classical Yang-Baxter equation, and the corresponding family of quantum $R$-matrices, that we have analyzed.  It is also possible, of course, to consider non-admissible Wilson lines, with an arbitrary weight for the
second copy of the Cartan.  This possibility has been analyzed in section 9 of \cite{Part1} and leads
to ``external field'' parameters that were introduced by Baxter \cite[section 8.12]{Baxter_book}. 
Note that the external field parameters arise for all $\g$ (they were explicitly analyzed
for $\sl_2$ in section 9.4 of \cite{Part1}), while the parameters that we have considered here involve an element of $\wedge^2\mf h$ and thus can appear only if $\g$ has rank greater than 1.

\section{RTT Relation in the Elliptic Cases}\label{elliptic}

\subsection{Construction}

In the rational and trigonometric settings, we saw that a quantum group -- the Yangian or the quantum loop group -- acts on the space of states at the end of any Wilson line.  In this section we will derive the same result in the elliptic case. In this case, what arises is Belavin's elliptic quantum group.

Let us work in the setting of \cite{Part1}, section 10.  We consider the projectively flat complex vector bundle of rank $N$ on an elliptic curve $E$ whose monodromy around the $a$ and $b$ cycles is given by matrices $A$ and $B$  that satisfy
\begin{equation}
 A^{-1} B^{-1} A B  = e^{\frac{2 \pi i }{N}} \op{Id} \;.
\end{equation} 
This defines a flat $PGL_N$ bundle on the elliptic curve, and so in particular a holomorphic $PGL_N$ bundle. 

Let us introduce  a Wilson line associated to a representation $W$, supported somewhere on $E$ and living in the $x$ direction, which we view as horizontal. Following our analysis in sections \ref{RTT_gl} and \ref{trigonometric}, we would like to cross this with a vertical fundamental Wilson with specified incoming and outgoing states, to find operators acting on $W$.

Locally on the elliptic curve, we can lift the $PGL_N$ bundle to an $SL_N$ bundle and define a vertical Wilson line in the fundamental representation of $SL_N$.  Thus, locally on the elliptic curve, we get operators 
\begin{equation} 
T^i_j(p) : W \to W  
 \end{equation}
from putting a vertical fundamental Wilson line at $p \in E$ with incoming and outgoing states $i$ and $j$. 

These operators do not make sense globally on the elliptic curve. There is a topological obstruction to lifting the $PGL_N$ bundle to an $SL_N$ bundle,  so that the vertical Wilson line is not well-defined globally on the elliptic curve.

To understand what happens globally, we will introduce a certain covering space. We let $\pi:\til{E} \to E$ be the $N^2$ to $1$ covering space with the feature that $E$ is the quotient of $\til{E}$ by the group of points on $\til{E}$ of order $N$.  More explicitly, if $E$ is the quotient of $\C$ by the lattice spanned by $(1,\tau)$, then $\til{E}$ is the quotient of $\C$ by the lattice spanned by $(N, N \tau)$.   

Since the monodromy around the $a$ and $b$ cycles of $E$ of the flat $PGL_N$ bundle is of order $N$, the monodromy of the bundle pulled back to $\til{E}$ is trivial, and so the bundle is trivial.

Given a point $p \in E$, we can define a fundamental Wilson line at $p$ by choosing a point $z \in \til{E}$ with $\pi(z) = p$.   Once we choose such a lift, we get a trivialization of the $PGL_N$ bundle at $p$ and so a fundamental Wilson line.  Therefore, if $z \in \til{E}$, we get an operator 
\begin{equation} 
T^i_j(z) : W \to W 
 \end{equation}
from placing a vertical Wilson line at $\pi(z) \in E$.   
In perturbation theory,\footnote{This is actually not true beyond perturbation theory: the exact $R$-matrix has finitely
many poles in each fundamental domain.  To reconcile the statements,  one must bear in mind that
$1/(z-\hbar)=1/z+\hbar/z^2+\dots$ can be viewed in perturbation theory as a function that only has poles at $z=0$.}  $T^i_j(z)$ only has poles where $\pi(z)=p$.

The operator $T^i_j(z)$ depends on the choice of lift of a point $p = \pi(z)$ in $E$ to $z \in \til{E}$.  Different lifts will give rise to different trivializations of the fibre of the $PGL_N$ bundle at $p \in E$, and so to operators $T^i_j(z)$ which differ by conjugating with some matrix acting on $\C^N$.   

Choosing a basis $a,b$ for the group of order $N$ points of $\til{E}$, we have the relations
\begin{align} \label{equation_quasi_periodicity_RTT} 
\begin{split}
T(z+a) &= A T(z) A^{-1} \;,\\
T(z + b) &= B T(z) B^{-1} \;, 
\end{split}
\end{align} 
In these equations we are viewing $\{T^i_j(z)\}$ as an $N \times N$ matrix whose entries are elements of $\op{End}(W)$, and we are conjugating it with the $N \times N$ matrices $A,B$ whose entries are scalars.  These relations follow from our definition of the $PGL_N$ bundle in terms of the matrices $A$ and $B$.

One can, of course, pass to the universal cover $\C \to \til{E}$, where we view $\til{E}$ as the quotient of $\C$ by the lattice generated by $N$ and $N \tau$.  In this language, the generators of the group of $N$-torsion points on $\til{E}$ are $1$ and $\tau$, and the relations (\ref{equation_quasi_periodicity_RTT}) take the form $T(z+1) = A T(z) A^{-1}$, $T(z+\tau) = B T(z) B^{-1}$.  In this form, we are viewing $T(z)$ as a function on $\C$ with values in $\mf{gl}_N \otimes \op{End}(W)$.  Since $A^N = 1$ and $B^N= 1$, we see that $T(z+N) = T(z)$ and $T(z+N\tau) = T(z)$, so that $T(z)$ depends to the elliptic curve $\til{E}$.

Of course, the operators $T^i_j(z)$ also satisfy the RTT relation we are familiar with from the rational and trigonometric cases. In addition, $T^i_j(z)$ satisfies the quantum determinant relation  
\begin{equation}
\sum_{k_r} \op{Alt}(k_0,\dots,k_{n-1}) \, T^{k_0}_{0}(z ) \, T^{k_1}_{1}(z + 2 \hbar  )\cdots T^{k_{n-1}}_{n-1}(z + 2(n-1) \hbar ) = 1\;. \label{equation_qdet_elliptic}  
\end{equation}

It is more difficult in the elliptic case than in the rational or trigonometric cases to turn these relations on the operators $T^i_j(z)$ into the relations defining an algebra with a simple set of generators.  This is because there are no natural boundary points on the elliptic curve around which we  can expand $T^i_j(z)$ as a series in $z$.

One way around this problem, which is sometimes considered in the literature \cite{ES2008}, is to avoid defining the elliptic quantum group as an algebra, but instead to simply describe what it means to give a representation of this putative algebra.  
\begin{definition}
A representation of the elliptic quantum group for $\mf{sl}_N$ is a finite dimensional vector space $W$ with an operator $T(z) : W \to W [[\hbar]]$, which is analytic as a function of $z \in \C$ with countably many singular points. At each order in $\hbar$ every singularity is of finite order. There are finitely many singular points in each fundamental domain for the elliptic curve.  

The operator $T(z)$ must satisfy the RTT relation, the quantum determinant relation, and and the quasi-periodicity relation (\ref{equation_quasi_periodicity_RTT}).  
\end{definition}
This definition matches one considered in \cite{ES2008} (except that we have also imposed the quantum determinant relation). This definition is equivalent to (but technically more convenient than) other definitions which build the elliptic quantum group as an associative algebra. 

\subsection{\texorpdfstring{Uniqueness of the Elliptic $R$-matrix}{Uniqueness of the Elliptic R-matrix}}

In the rational and trigonometric cases, we saw that the $R$-matrix was uniquely constrained by the Yang-Baxter equation, unitarity, and certain supplementary equations. In type $A$ the supplementary equation is the quantum determinant relation.  In the elliptic case, a similar uniqueness result holds.  We will be brief since the argument is similar to that presented in the rational and trigonometric cases.

Suppose that we have two elliptic solutions $R,R'$ to the quantum Yang-Baxter equation. We will view $R,R'$ as series in $\hbar$ whose coefficients are meromorphic functions on $\C$ valued in $\mf{gl}_N \otimes \mf{gl}_N$, and which satisfy the quasi-periodicity conditions discussed above with respect to the translations $z \mapsto z + 1$, $z \mapsto z + \tau$.  We will also assume that $R,R'$ satisfy the quantum determinant relation and the unitarity condition.  

Suppose that $R,R'$ agree modulo $\hbar^k$, and that modulo $\hbar^2$ both are given by the solution $r(z) = r^{(1)}(z)$ to the classical Yang-Baxter equation that we derived from field theory in section 10 of
\cite{Part1}.   Then $R,R'$ differ by $\hbar^k r^{(k)}(z)$,  where $r^{(k)}(z)$ is a first-order deformation of $r^{(1)}(z)$ as a solution to the CYBE.  The quantum determinant relation applied to each variable tells us that $r^{(k)}(z) \in \mf{sl}_N \otimes \mf{sl}_N$.

Belavin-Drinfeld's classification  \cite{Belavin-Drinfeld}  of solutions to the CYBE in the elliptic case tells us that $r^{(k)}(z)$ is a multiple of $r^{(1)}(z)$.  Therefore $R,R'$ differ by a non-linear reparametrization of $\hbar$ of the form $\hbar \mapsto \hbar + c_2 \hbar^2 + \dots$.   

In terms of field theory, reparameterization of $\hbar$ amounts to adding counter-terms of the form $\hbar^k \int \d z \,  \textrm{CS}(A)$ at each order in the loop expansion.  Such counter-terms are not forbidden by the symmetries of the theory on $\R^2 \times E$.  We thus find that the elliptic $R$-matrix is uniquely constrained by the formal properties it satisfies, up to a reparameterization of $\hbar$ which is an inherent ambiguity in quantizing the system.

\section*{Acknowledgments}

K.~C.~is supported by the NSERC Discovery Grant program and by the Perimeter Institute for Theoretical Physics. Research at Perimeter Institute is supported by the Government of Canada through Industry Canada and by the Province of Ontario through the Ministry of Research and Innovation. 
E.~W.~is partially supported by National Science Foundation grant NSF Grant PHY-1606531.
M.~Y.~is partially supported by WPI program (MEXT, Japan), by JSPS Program No.\ R2603, by JSPS KAKENHI Grant No.\ 15K17634, and by JSPS-NRF Joint Research Project.



\bibliographystyle{unsrt}

\begin{thebibliography}{99}

\bibitem{Costello_2013}
  K.~Costello,
  ``Supersymmetric Gauge Theory and the Yangian,''
  arXiv:1303.2632 [hep-th].

\bibitem{Part1} 
  K.~Costello, E.~Witten and M.~Yamazaki,
  ``Gauge Theory and Integrability, I,''\\
  arXiv:1709.09993 [hep-th].

\bibitem{Witten_2016}
  E.~Witten,
  ``Integrable Lattice Models From Gauge Theory,''
  arXiv:1611.00592 [hep-th].
  
\bibitem{Takhtajan-Faddeev} 
  L.~A.~Takhtajan and L.~D.~Faddeev,
  ``The Quantum method of the inverse problem and the Heisenberg XYZ model,''
  Russ.\ Math.\ Surveys {\bf 34}, no. 5, 11 (1979)
  [Usp.\ Mat.\ Nauk {\bf 34}, no. 5, 13 (1979)].
  
\bibitem{Kulish-Sklyanin} 
  P.~P.~Kulish and E.~K.~Sklyanin,
  ``On the solution of the Yang-Baxter equation,''
  J.\ Sov.\ Math.\  {\bf 19}, 1596 (1982)
  [Zap.\ Nauchn.\ Semin.\  {\bf 95}, 129 (1980)].
  
\bibitem{Drinfeld_Hopf}
 V.~G.~Drinfeld, ``Hopf Algebras and the Quantum Yang-Baxter Equation,''
  Sov.\ Math.\ Dokl.\  {\bf 32}, 254 (1985)
  [Dokl.\ Akad.\ Nauk Ser.\ Fiz.\  {\bf 283}, 1060 (1985)].
   
\bibitem{Drinfeld_original} 
  V.~G.~Drinfeld,
  ``Quantum Groups,''
  J.\ Sov.\ Math.\  {\bf 41}, 898 (1988)
  [Zap.\ Nauchn.\ Semin.\  {\bf 155}, 18 (1986)].

\bibitem{FRT_1988}
L.~Faddeev, N.~Reshitikhin and L.~Takhtajan, 
``Quantization of Lie Groups and Lie Algebras,''
 Algebr.\ Analiz.\ {\bf 1}, LOMI-E-87-14 (1987).    
 
 \bibitem{Chari-Pressley}
 V.~Chari and A.~Pressley, ``A Guide to Quantum Groups,'' Cambridge University Press, 1994.
  
\bibitem{Drinfeld_ICM}
 V.~G.~Drinfeld, ``Quantum Groups,'' 
 in {\it Proceedings of the International Congress of Mathematicians, Berkeley, 1986}, American Mathematical Society, 1987.

\bibitem{Ding-Frenkel}  
J.~T.~Ding and I.~B.~Frenkel, ``Isomorphism of two realizations of quantum affine algebra $U_q(\widehat{\mathfrak{gl}(n)})$,''
  Comm.\ Math.\ Phys.\ , {\bf 156}, 277 (1993).
  
\bibitem{Belavin-Drinfeld}
 A.~A.~Belavin and V.~G.~Drinfeld,
 ``Solutions of the Classical {Y}ang-{B}axter equation for Simple {L}ie Algebras,'' 
 Funktsional.\ Anal.\ i Prilozhen.\  {\bf 16}, 1 (1982).  

%
  
\bibitem{Baxter_book}
R.~J.~Baxter, ``Exactly Solved Models in Statistical Mechanics,'' Academic Press, 1982.  

\bibitem{ES2008}
E.~Etingof and O.~Schiffmann,
``A Link Between Two Elliptic Quantum Groups,'' \\
arXiv:math/9801108 [math.QA].

\end{thebibliography}

\end{document}